\begin{document}

\title{Nonlocal sets of orthogonal multipartite product states with less members}
\author{Hui-Juan Zuo$^*$\and Jia-Huan Liu\and Xiao-Fan Zhen\and Shao-Ming Fei}
\institute{H.-J. Zuo \and J.-H. Liu\and X.-F. Zhen \at
              School of Mathematical Sciences, Hebei Normal University, Shijiazhuang, 050024, China \\
              \email{huijuanzuo@163.com}
        \and
           S.-M. Fei\at
           $^{1}$ School of Mathematical Sciences, Capital Normal University, Beijing, 100048, China\\
           $^{2}$Max-Planck-Institute for Mathematics in the Sciences, 04103 Leipzig, Germany\\
              \email{feishm@mail.cnu.edu.cn}
           }

\date{Received: date / Accepted: date}

\maketitle
\begin{abstract}
We study the constructions of nonlocal orthogonal product states in multipartite systems that cannot be distinguished by local operations and classical communication. We first present two constructions of nonlocal orthogonal product states in tripartite systems $\mathcal{C}^{d}\otimes\mathcal{C}^{d}\otimes\mathcal{C}^{d}~(d\geq3)$ and  $\mathcal{C}^d\otimes \mathcal{C}^{d+1}\otimes \mathcal{C}^{d+2}~(d\geq 3)$. Then for general tripartite quantum system $\mathcal{C}^{n_{1}}\otimes\mathcal{C}^{n_{2}}\otimes\mathcal{C}^{n_{3}}$ $(3\leq n_{1}\leq n_{2}\leq n_{3})$, we obtain $2(n_{2}+n_{3}-1)-n_{1}$ nonlocal orthogonal product states. Finally, we put forward a new construction approach in $\mathcal{C}^{d_{1}}\otimes \mathcal{C}^{d_{2}}\otimes\cdots\otimes \mathcal{C}^{d_{n}}$ $(d_1,d_2,\cdots d_n\geq3,\, n>6)$ multipartite systems. Remarkably, our indistinguishable sets contain less nonlocal product states than the existing ones, which improves the recent results and highlights their related applications in quantum information processing.

\keywords{Nonlocal set \and orthogonal product state \and local operations and classical communication \and local indistinguishability}
\end{abstract}

\section{Introduction}
Non-locality is the fundamental feature of quantum mechanics.
The indistinguishability of a set of orthogonal multipartite product states under local operations and classical communication (LOCC)
has attracted much attention as a distinguished phenomenon of non-locality without quantum entanglement \cite{Peres1991,Nielsen1998,Bennett1999,Li2009,Bandy2011,Xu2016,Walgate2000,Horo2003,Chen2004,Feng2009,Yu2012,Nathanson2013,Yang2013,Zhangzc2013,Yu,Zhangzc2014,Zhangzc2015,Wangyl2015,Zhangzc2016,Niset2006,Wangyl2017,Zhangzc2018,Li2019,Xu2020,Xu2021}. Many theoretical achievements have been made in the local discrimination on sets of orthogonal product states as well as orthogonal maximally entangled states.

Local indistinguishability of quantum states is tightly related to the study on the relationship between quantum entanglement and quantum non-locality. Walgate {\it et al}. \cite{Walgate2000} proved that any two orthogonal quantum states, either entangled or separable, can always be distinguished perfectly, showing that quantum entanglement is not a sufficient condition for non-locality. In recent years, many results about non-locality without entanglement in bipartite systems have been obtained. Zhang {\it et al}. \cite{Zhangzc2014,Zhangzc2015} investigated orthogonal product states in $\mathcal{C}^{d}\otimes \mathcal{C}^{d}$ quantum systems and proved that there are sets of $4d-4$ indistinguishable orthogonal product states. Wang {\it et al}. \cite{Wangyl2015} showed that there exist sets of $3~(m+n)-9$ indistinguishable orthogonal product states in general $\mathcal{C}^{m}\otimes \mathcal{C}^{n}$ bipartite systems. Subsequently, the number of the members in a set of locally indistinguishable orthogonal product states in $\mathcal{C}^{m}\otimes \mathcal{C}^{n} (n>m)$ quantum systems is reduced to $2n-1$ by Zhang {\it et al}. \cite{Zhangzc2016}.

Beyond these advances, a certain number of achievements have been made in the local discrimination of multipartite orthogonal product states. In 2006, Nisetet {\it et al}. \cite{Niset2006} presented a class of LOCC indistinguishable orthogonal product bases for ~$(d_j\geq n-1)$, where $n$ is the number of the subsystems and $d_j$ is the dimension of the $j$th subsystem. Zhang {\it et al}. \cite{Zhangzc2015} constructed locally indistinguishable orthogonal products states $\mathcal{C}^d\otimes \mathcal{C}^d\otimes \mathcal{C}^d$ and locally indistinguishable orthogonal product basis quantum
states in multipartite systems. Then, Xu {\it et al}. \cite{Xu2016} gave a nonlocal set of $2n$ multipartite orthogonal product states in ~$\mathcal{C}^{d_{1}}\otimes \mathcal{C}^{d_{2}}\otimes\cdots\otimes \mathcal{C}^{d_{n}}$ $(d_1,d_2,\cdots, d_n\geq3, n\geq3)$. In \cite{Wangyl2017}, Wang {\it et al}. showed that there are ~$2(n_1+n_3)-3$ indistinguishable orthogonal product states in ~$\mathcal{C}^{n_{1}}\otimes\mathcal{C}^{n_{2}}\otimes\mathcal{C}^{n_{3}}$ $(3\leq n_{1}\leq n_{2}\leq n_{3})$. Recently, Zhang {\it et al}. \cite{Zhangzc2017} put forward three general methods to construct nonlocal multipartite orthogonal product states by using the bipartite LOCC indistinguishable states. And Halder {\it et al}. presented a new construction of ~$2n(d-1)$ locally indistinguishable product states in ~$\mathcal{C}^{d}\otimes \mathcal{C}^{d}\otimes\cdots\otimes \mathcal{C}^{d}$ $(d\geq2)$ in \cite{Halder2018}. Shortly before, Jiang {\it et al}. \cite{Xu2020} presented a construction on sets with smaller number of nonlocal orthogonal product states which cannot be distinguished under LOCC.

An interesting problem with both physical and mathematical significance is how fewer number of elements it could be to make up a nonlocal
set of multipartite orthogonal product states for a given Hilbert space. With a smaller nonlocal set of states, one can obtain more new nonlocal sets of quantum states by adding orthogonal product states to the set. In this paper we mainly explore further construction on the LOCC indistinguishable sets with less member of nonlocal multipartite orthogonal product states. Firstly, we set about the question from the tripartite system  $\mathcal{C}^{d}\otimes\mathcal{C}^{d}\otimes\mathcal{C}^{d}~(d\geq3)$ and  $\mathcal{C}^d\otimes \mathcal{C}^{d+1}\otimes \mathcal{C}^{d+2}~(d\geq 3)$. We then come to conclusions for general tripartite quantum system~$\mathcal{C}^{n_{1}}\otimes\mathcal{C}^{n_{2}}\otimes\mathcal{C}^{n_{3}}$ $(3\leq n_{1}\leq n_{2}\leq n_{3})$. We present the smaller set with $2(n_{2}+n_{3}-1)-n_{1}$ nonlocal orthogonal product states, which is better than the previous results. Moreover, we present our direct constructions of nonlocal orthogonal product states in general arbitrary multipartite quantum systems based on the constructions for tripartite systems.

\section{Constructions in Tripartite Quantum Systems}

In this section, we give new nonlocal set of orthogonal product states in tripartite quantum systems. For simplicity, we choose the computational basis $\{|i\rangle\}_{i=0}^{d-1}$ for each $d$-dimensional subsystem $C^d$ and denote $|i_1\pm i_2\pm...\pm i_n\rangle$ the normalized states $\frac{1}{\sqrt{n}}(|i_1\rangle \pm |i_2\rangle\pm...\pm |i_n\rangle)$ without confusion.

\begin{lemma} The following set of $3d-2$ orthogonal product states in tripartite $\mathcal{C}^d\otimes \mathcal{C}^d\otimes \mathcal{C}^d$ $(d\geq3)$ quantum systems, denoted by $A$, $B$ and $C$,
\begin{equation}
\begin{aligned}
|\phi_{i}\rangle&=|0-i\rangle_{A}|0\rangle_{B}|i\rangle_{C},\\
|\phi_{i+(d-1)}\rangle & =|i\rangle_{A}|0-i\rangle_{B}|0\rangle_{C}, \\
|\phi_{i+2(d-1)}\rangle & =|0\rangle_{A}|i\rangle_{B}|0-i\rangle_{C},\\
|\phi_{3(d-1)+1}\rangle & =|0+\cdots+(d-1)\rangle_{A}|0+\cdots+\\& ~~~~(d-1)\rangle_{B}|0+\cdots+(d-1)\rangle_{C},
\end{aligned}
\end{equation}
$i=1,2,\cdots,d-1$, are indistinguishable under LOCC for arbitrary dimension $d$.
\end{lemma}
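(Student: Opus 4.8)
The plan is to use the standard criterion for local indistinguishability of orthogonal product states: a set of mutually orthogonal states is LOCC indistinguishable provided that whenever any single party is asked to go first, an orthogonality-preserving measurement on that party must have all of its POVM elements proportional to the identity. Indeed, in any LOCC protocol that perfectly discriminates the states, the party measuring in each round must keep the (normalized) post-measurement states mutually orthogonal; if the only such orthogonality-preserving measurement is the trivial one, then after each round the ensemble is unchanged up to normalization and no party ever gains information, so the states can never be distinguished. Thus it suffices to show that for each of $A$, $B$, $C$, an orthogonality-preserving first measurement is trivial. Because the whole set is invariant \emph{as a set} under the cyclic relabelling $A\to B\to C\to A$ — this permutation carries the first group of states to the second, the second to the third, the third to the first, and fixes $|\phi_{3(d-1)+1}\rangle$ — it suffices to treat Alice.

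I would write a generic POVM element of Alice's first measurement as a positive semidefinite $d\times d$ matrix $E=(E_{kl})$ in the computational basis, so that orthogonality preservation reads $\langle\phi_j|E\otimes I\otimes I|\phi_k\rangle=0$ for all $j\neq k$. Since such an overlap factorizes as (Alice matrix element)$\times$(Bob overlap)$\times$(Charlie overlap), the idea is to single out the pairs on which Bob's and Charlie's overlaps are both nonzero, turning the condition into a linear constraint on the entries of $E$. First, applying the condition to two states from the second group, $|\phi_{i+(d-1)}\rangle$ and $|\phi_{j+(d-1)}\rangle$ with $i\neq j$, whose Bob overlap is $\tfrac{1}{2}$ and Charlie overlap is $1$, forces $E_{ij}=0$ for all distinct $i,j\in\{1,\dots,d-1\}$. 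Next, pairing the second group against the third, $|\phi_{i+(d-1)}\rangle$ with $|\phi_{j+2(d-1)}\rangle$, the overlaps vanish unless $i=j$, and in that case the constraint forces $E_{i0}=0$, hence by Hermiticity $E_{0i}=0$, for every $i\in\{1,\dots,d-1\}$. Together these show that $E$ is diagonal.

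Finally, to show the diagonal entries coincide, I would pair the fully symmetric state $|\phi_{3(d-1)+1}\rangle$ with a state $|\phi_i\rangle$ from the first group. Here Bob's and Charlie's overlaps are each $1/\sqrt{d}\neq 0$, while Alice's matrix element, evaluated using the already-established diagonal form of $E$, equals $(E_{00}-E_{ii})/\sqrt{2d}$; orthogonality then forces $E_{00}=E_{ii}$ for all $i\in\{1,\dots,d-1\}$. Hence $E\propto I$, that is, Alice's first measurement is trivial, and by the cyclic symmetry the same conclusion holds for Bob and Charlie. By the reduction in the first paragraph, this establishes that the set is LOCC indistinguishable for every $d\geq 3$.

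The main obstacle is organizational rather than computational: one must scan all pairs of states and isolate exactly those on which the two non-measuring parties have nonvanishing overlaps, since only those yield constraints on $E$, while every other pair is already orthogonal and gives no information. One must also respect the logical order — first proving that $E$ is diagonal from the second and third groups, and only then using the symmetric state to equalize the diagonal entries, since the latter computation relies on $E$ being diagonal. Verifying the cyclic invariance of the set and the validity of the reduction to trivial measurements are the remaining points that require care.
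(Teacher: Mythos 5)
Your proposal is correct and follows essentially the same route as the paper's own proof: rule out nontrivial first measurements for Alice via orthogonality-preservation constraints (off-diagonal entries of the POVM element killed by pairs within and between the second and third groups, diagonal entries equalized by pairing the first group against the fully symmetric state), then invoke the cyclic symmetry of the set to cover Bob and Charlie. Your treatment is in fact slightly more careful than the paper's on the logical ordering (using the diagonal form of $E$ before equalizing diagonal entries) and on justifying the reduction to trivial measurements, but the argument is the same.
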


\begin{proof} As these states are symmetric, we only need to prove the indistinguishability when Alice implements measurements on the subsystem system $A$ first.
Assume that Alice  applies nontrivial and nondisturbing measurement with the positive operator-valued measure (POVM) elements $M_{A}^{\dagger}M_{A}$ in the computational basis,
$$ M_{A}^{\dagger}M_{A}=\left(
 \begin{array}{cccc}
 a_{00} & a_{01} & \cdots & a_{0,d-1}\\
 a_{10} & a_{11} & \cdots & a_{1,d-1} \\
 \vdots & \vdots & \ddots & \vdots  \\
 a_{d-1,0} & a_{d-1,1} & \cdots & a_{d-1,d-1} \\
 \end{array}
 \right).
$$
The post-measurement states $\{M_{A}\otimes I_{B}\otimes I_{C}|\phi_{i}\rangle$, $i=1,\cdots,3d-2\}$ should be mutually orthogonal.
Consider the states $|\phi_{i+(d-1)}\rangle$ and $|\phi_{i+2(d-1)}\rangle$, $ i=1,2,\cdots,d-1$, we get $\langle i|M_{A}^{\dagger}M_{A}|0\rangle \langle0-i|I_{B}|i\rangle \langle0|I_{C}|0-i\rangle=0$, that is, $\langle i|M_{A}^{\dagger}M_{A}$ $|0\rangle=0$. Therefore, $a_{0i}=a_{i0}=0$ for $i=1,2,\cdots,d-1$.

For the states~$|\phi_{i+(d-1)}\rangle$ and~$|\phi_{j+(d-1)}\rangle$, $ 1 \leq i\neq j\leq d-1$, we have~$\langle i|M_{A}^{\dagger}M_{A}$ $|j\rangle \langle0-i|I_{B}|0-j\rangle \langle0|I_{C}|0\rangle=0$, namely, $\langle i|M_{A}^{\dagger}M_{A}|j\rangle=0$. Hence, $a_{ij}=0$ for $1\leq i\neq j\leq d-1$.

For the states~$|\phi_{i}\rangle$ and~$|\phi_{3(d-1)+1}\rangle$, $1\leq i\leq d-1$, we obtain $\langle 0-i|M_{A}^{\dagger}M_{A}|0+\cdots+(d-1)\rangle \langle0|I_{B}|0+\cdots+(d-1)\rangle \langle i|I_{C}|0+\cdots+(d-1)\rangle=0$, i.e., $\langle 0-i|M_{A}^{\dagger}M_{A}|0+\cdots+(d-1)\rangle=0$, which give rise to $a_{00}=a_{ii}$ for $1\leq i\leq d-1$.

In summary all the Alice's POVM elements $M_{A}^{\dagger}M_{A}$ are proportional to the identity matrix. Thus Alice cannot start with a nontrivial measurement, and neither
Bob nor Charlie can start with the measurements on the subsystems $B$ and $C$, respectively, due to the symmetry of the given states. Therefore, these $3d-2$ states cannot be distinguished by LOCC.
\end{proof}

In Ref. \cite{Xu2020} the authors presented an elegant construction of 10 indistinguishable states under LOCC in $C^3\otimes C^3\otimes C^3$. According to our Lemma 1, we obtain 7 nonlocal orthogonal product states, which is less than the one from \cite{Xu2020}.

\begin{lemma} The following $3d+4$ orthogonal product states in
$\mathcal{C}^d\otimes \mathcal{C}^{d+1}\otimes \mathcal{C}^{d+2}$ $(d\geq3)$ cannot be distinguished under LOCC,
\begin{equation}
\begin{aligned}
|\phi_{i}\rangle & =|0-i\rangle_{A}|0\rangle_{B}|i\rangle_{C},\\
|\phi_{i+(d-1)}\rangle & =|i\rangle_{A}|0-i\rangle_{B}|0\rangle_{C}, \\
|\phi_{i+2(d-1)}\rangle & =|0\rangle_{A}|i\rangle_{B}|0-i\rangle_{C},\\
|\phi_{3(d-1)+1}\rangle & =|0-1\rangle_{A}|0\rangle_{B}|d\rangle_{C},\\
|\phi_{3(d-1)+2}\rangle & =|0-1\rangle_{A}|d\rangle_{B}|0\rangle_{C},\\
|\phi_{3(d-1)+3}\rangle & =|(d-2)\rangle_{A}|(d-2)\rangle_{B}|(d-1)-d\rangle_{C},\\
|\phi_{3(d-1)+4}\rangle & =|(d-2)\rangle_{A}|(d-1)-d\rangle_{B}|(d-2)\rangle_{C},\\
|\phi_{3(d-1)+5}\rangle & =|0-1\rangle_{A}|0\rangle_{B}|(d+1)\rangle_{C},\\
|\phi_{3(d-1)+6}\rangle & =|(d-1)\rangle_{A}|(d-1)\rangle_{B}|d-(d+1)\rangle_{C},\\
|\phi_{3(d-1)+7}\rangle & =|0+\cdots+(d-1)\rangle_{A}|0+\cdots+d\rangle_{B} |0+\\ & ~~~~\cdots+(d+1)\rangle_{C},
\end{aligned}
\end{equation}
where $i=1,2,\cdots,d-1.$
\end{lemma}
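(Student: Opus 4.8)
The plan is to reproduce the strategy of Lemma 1: if the set were perfectly LOCC-distinguishable, some party would be able to perform a nontrivial measurement first without disturbing the mutual orthogonality of the states, and I will show this is impossible for each of the three parties (orthogonality of the states themselves is routine to check, so only local indistinguishability needs argument). Writing the first POVM element $M_X^\dagger M_X$ of a party $X$ as a matrix in the computational basis of $X$, the condition that the post-measurement states $\{\,M_X\otimes I\otimes I\,|\phi_k\rangle\,\}$ remain pairwise orthogonal becomes, for each pair $(k,l)$ whose other two tensor factors are non-orthogonal, a single scalar equation $\langle u_k|M_X^\dagger M_X|u_l\rangle=0$ on the $X$-components $u_k,u_l$. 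The proof is then the bookkeeping of choosing, for each matrix entry, a pair of states that isolates exactly that entry, so as to force $M_X^\dagger M_X$ to be a scalar multiple of the identity.

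Because the three local dimensions differ here, there is no global permutation symmetry and the parties must be treated separately. For Alice (dimension $d$) the computation is essentially that of Lemma 1: pairs drawn from the second family $|\phi_{i+(d-1)}\rangle$ annihilate $a_{ij}$ for distinct $i,j\in\{1,\dots,d-1\}$, pairing $|\phi_{i+(d-1)}\rangle$ with the third-family state $|\phi_{i+2(d-1)}\rangle$ gives $a_{0i}=a_{i0}=0$, and pairing each first-family state $|\phi_i\rangle$ with the stopper state $|\phi_{3(d-1)+7}\rangle$ forces $a_{00}=a_{ii}$; the six extra states introduce no new Alice-basis vectors and only add relations consistent with a scalar matrix. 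The real content is Bob (dimension $d+1$) and Charlie (dimension $d+2$), where the enlarged matrices contain the new indices $d$ and $d+1$ and the added states $|\phi_{3(d-1)+1}\rangle,\dots,|\phi_{3(d-1)+6}\rangle$ are engineered precisely to reach them.

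For Charlie I expect the off-diagonal block on $\{1,\dots,d-1\}$ to vanish by pairing first-family states $|\phi_i\rangle,|\phi_j\rangle$ (which share $B=|0\rangle$), while $c_{i,d}=c_{i,d+1}=c_{d,d+1}=0$ follow from the family $\{|\phi_1\rangle,|\phi_{3(d-1)+1}\rangle,|\phi_{3(d-1)+5}\rangle\}$ of states all carrying $A=|0-1\rangle$, $B=|0\rangle$; the entries of row and column $0$ come from pairing $|\phi_{i+(d-1)}\rangle$ with $|\phi_i\rangle$ and with $|\phi_{3(d-1)+1}\rangle$, $|\phi_{3(d-1)+5}\rangle$. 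This makes $M_C^\dagger M_C$ diagonal, after which a connected chain of stopper-state pairings equalises the diagonal: the third family gives $c_{00}=c_{ii}$, the state $|\phi_{3(d-1)+3}\rangle$ gives $c_{d-1,d-1}=c_{dd}$, and $|\phi_{3(d-1)+6}\rangle$ gives $c_{dd}=c_{d+1,d+1}$, so $M_C^\dagger M_C\propto I$. Bob is treated identically with $B$ and $C$ factors interchanged, the third-family pairs and the states $|\phi_{3(d-1)+2}\rangle$, $|\phi_{3(d-1)+4}\rangle$ now playing the key roles and the stopper pairings running through the second family and $|\phi_{3(d-1)+4}\rangle$ to give $b_{00}=\cdots=b_{dd}$.

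The main obstacle is the completeness of this bookkeeping once the symmetry is broken: I must check that every off-diagonal entry coupling the new indices $d,d+1$ to the index $0$ and to the block $\{1,\dots,d-1\}$ is genuinely annihilated, and that the diagonal relations form a single connected chain reaching the top index rather than splitting into disconnected groups. Here the short states $|\phi_{3(d-1)+1}\rangle$ through $|\phi_{3(d-1)+6}\rangle$ supply exactly the missing links, and their fixed factors are chosen with this in mind: the labels $|d-2\rangle$ and $|d-1\rangle$ are what keep $|\phi_{3(d-1)+3}\rangle$ and $|\phi_{3(d-1)+6}\rangle$ non-orthogonal to the all-plus stopper state, which is the feature that lets them transmit the diagonal equalities to the new indices. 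Thus the only genuine verification is that none of the required pairs has an accidental orthogonality in one of its two spectator factors; once this is confirmed for all three parties, no party can start with a nontrivial measurement and the $3d+4$ states are LOCC-indistinguishable.
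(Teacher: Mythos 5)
Your proposal is correct and follows essentially the same route as the paper's own proof in Appendix A: the same POVM--orthogonality bookkeeping party by party, the same pairings (first family with $|\phi_{3(d-1)+1}\rangle,|\phi_{3(d-1)+5}\rangle$ for Charlie's new rows and columns, third family with $|\phi_{3(d-1)+2}\rangle$ for Bob's, second family with the first for row $0$), and the same stopper-state chain through $|\phi_{3(d-1)+3}\rangle$, $|\phi_{3(d-1)+4}\rangle$, $|\phi_{3(d-1)+6}\rangle$ to equalise the diagonals up to the top index. The only slight imprecision is that killing $c_{i,d}$ and $c_{i,d+1}$ for all $i=1,\dots,d-1$ requires pairing every $|\phi_{i}\rangle$ (not just $|\phi_{1}\rangle$) with $|\phi_{3(d-1)+1}\rangle$ and $|\phi_{3(d-1)+5}\rangle$, which indeed works because $\langle 0-i|0-1\rangle\neq 0$ for all such $i$.
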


The proof of the Lemma 2 is given in Appendix A.
By Lemma 2, we have 13 indistinguishable orthogonal product states in $C^3\otimes C^4\otimes C^5$, which is less than the number 16 given by the result from \cite{Xu2020}.

Based on the above recursive construction method used in constructing the non-local sets in Lemma 1 and 2, we have the following conclusions for general tripartite systems.

\begin{theorem} In $\mathcal{C}^{n_{1}}\otimes\mathcal{C}^{n_{2}}\otimes\mathcal{C}^{n_{3}}$ $(3\leq n_{1}\leq n_{2}\leq n_{3})$ tripartite quantum systems, there exist following $2~(n_{2}+n_{3}-1)-n_{1}$ orthogonal product states that cannot be distinguished under LOCC,
\begin{equation}
\begin{aligned}
|\phi_{i}\rangle & =|0-i\rangle_{A}|0\rangle_{B}|i\rangle_{C},\\
|\phi_{i+(n_1-1)}\rangle & =|i\rangle_{A}|0-i\rangle_{B}|0\rangle_{C}, \\
|\phi_{i+2(n_{1}-1)}\rangle & =|0\rangle_{A}|i\rangle_{B}|0-i\rangle_{C},\\
|\phi_{j+2(n_{1}-1)}\rangle & =|0-1\rangle_{A}|0\rangle_{B}|j\rangle_{C},\\
|\phi_{j+2(n_{1}-1)+(n_{2}-n_{1})}\rangle & =|0-1\rangle_{A}|j\rangle_{B}|0\rangle_{C},\\
|\phi_{j+2(n_{1}-1)+2(n_{2}-n_{1})}\rangle & =|0\rangle_{A}|m\rangle_{B}|(j-1)-j\rangle_{C},\\
|\phi_{j+2(n_{1}-1)+3(n_{2}-n_{1})}\rangle & =|0\rangle_{A}|(j-1)-j\rangle_{B}|m\rangle_{C},\\
|\phi_{k+2(n_{1}-1)+3(n_{2}-n_{1})}\rangle & =|0-1\rangle_{A}|0\rangle_{B}|k\rangle_{C},\\
|\phi_{k-n_{1}+2n_{2}+n_{3}-2}\rangle & =|0\rangle_{A}|m\rangle_{B}|(k-1)-k\rangle_{C},\\
|\phi_{2(n_{2}+n_{3}-1)-n_{1}}\rangle & =|0+\cdots+(n_{1}-1)\rangle_{A}
|0+\cdots+ \\ &~~~~(n_{2}-1)\rangle_{B}|0+\cdots+(n_{3}-1)\rangle_{C},
\end{aligned}
\end{equation}
where $i=1,\cdots,n_1-1,$ $j=n_{1},\cdots,n_{2}-1,$ and $k=n_{2},\cdots,n_{3}-1$. When $j-1$ is even, $m=1$; when $j-1$ is odd, $m=2$. And when $k-1$ is even, $m=1$; when $k-1$ is odd, $m=2$.
\end{theorem}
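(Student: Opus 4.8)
The plan is to reuse the strategy of Lemma 1: I would show that none of the three parties can initiate a nontrivial, nondisturbing measurement, so that any LOCC protocol is forced to start with a trivial measurement and the $2(n_2+n_3-1)-n_1$ states are consequently indistinguishable. Because the local dimensions $n_1\le n_2\le n_3$ are now unequal, the full permutation symmetry exploited for $\mathcal{C}^{d}\otimes\mathcal{C}^{d}\otimes\mathcal{C}^{d}$ is no longer available, so I would argue for Alice, Bob and Charlie separately. For a party $X$ I would let $X$ apply a POVM with element $M_X^{\dagger}M_X=(x_{pq})$ in the computational basis, impose that the post-measurement states stay pairwise orthogonal, and aim to deduce that $M_X^{\dagger}M_X$ is proportional to the identity.

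First I would eliminate every off-diagonal entry $x_{pq}$. The states carrying a single basis vector on the measured subsystem exhaust its whole index set; for Charlie, for instance, $|0\rangle_C$ occurs in $|i\rangle_A|0-i\rangle_B|0\rangle_C$, $|i\rangle_C$ in $|0-i\rangle_A|0\rangle_B|i\rangle_C$, $|j\rangle_C$ in $|0-1\rangle_A|0\rangle_B|j\rangle_C$, and $|k\rangle_C$ in $|0-1\rangle_A|0\rangle_B|k\rangle_C$, covering $0,\dots,n_3-1$. For a fixed $x_{pq}$ I would pick a pair of states that, under the identity on the two silent parties, overlap nontrivially there while carrying $|p\rangle$ and $|q\rangle$ on $X$; orthogonality then collapses to $x_{pq}=0$. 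A few entries come out only after a two-step argument: comparing $|0-1\rangle_A|j\rangle_B|0\rangle_C$ with $|1\rangle_A|0-1\rangle_B|0\rangle_C$ yields $b_{j0}=b_{j1}$, which I would combine with $b_{j1}=0$ (from comparing the same state with $|0\rangle_A|1\rangle_B|0-1\rangle_C$) to obtain $b_{j0}=0$. Here I would use $n_1\ge3$, so that both $|1\rangle$ and $|2\rangle$ sit in the low block.

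Next I would force all diagonal entries to agree, using the uniform ``stopper'' state (the last one listed). Under the identity on two parties the stopper has nonzero overlap with any single basis vector, so pairing it with a state carrying one of the difference vectors $|0-i\rangle$, $|(j-1)-j\rangle$ or $|(k-1)-k\rangle$ on the measured subsystem gives, once the off-diagonals are known to vanish, the equalities $x_{00}=x_{ii}$, $x_{j-1,j-1}=x_{jj}$ and $x_{k-1,k-1}=x_{kk}$. The essential feature is that these differences involve consecutive indices, so the equalities telescope: the low block is tied together by the $|0-i\rangle$ states, while the instances $j=n_1$ and $k=n_2$ produce $x_{n_1-1,n_1-1}=x_{n_1,n_1}$ and $x_{n_2-1,n_2-1}=x_{n_2,n_2}$, splicing the three blocks into one chain that spans $0,\dots,\dim X-1$. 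Thus $M_X^{\dagger}M_X\propto I$ for each $X$.

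The main obstacle is the bookkeeping for the larger subsystems $B$ and $C$: I must verify that the chosen state pairs really reach every off-diagonal entry across all three index blocks and that the telescoping chain is unbroken at the junctions $j=n_1$ and $k=n_2$. Some care is also needed with the parity-dependent label $m\in\{1,2\}$ in the families $|0\rangle_A|m\rangle_B|(j-1)-j\rangle_C$, $|0\rangle_A|(j-1)-j\rangle_B|m\rangle_C$ and $|0\rangle_A|m\rangle_B|(k-1)-k\rangle_C$: $m$ has to be chosen so that the overlaps of $|m\rangle$ with its partner states and with the uniform stopper are genuinely nonzero, which is precisely what the stated even/odd rule secures. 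Once $M_X^{\dagger}M_X\propto I$ is established for $X=A,B,C$, no party can begin with a nontrivial measurement, and the standard LOCC argument finishes the proof.
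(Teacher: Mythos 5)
Your proposal follows essentially the same route as the paper's proof: a party-by-party POVM analysis that kills all off-diagonal entries of $M_X^{\dagger}M_X$ via pairwise orthogonality of post-measurement states (including the same two-step trick with $|0-1\rangle$ to reach entries like $b_{0j}$), and then uses the uniform stopper state to telescope the diagonal entries into a single chain across the three index blocks, exactly as in the paper's Appendix B. The only small quibble is your reading of the parity rule for $m$: its purpose is to keep the listed states mutually orthogonal (consecutive difference vectors $|(j-1)-j\rangle$ and $|j-(j+1)\rangle$ are not orthogonal, so the $B$-factors must alternate), rather than to secure nonzero overlaps in the measurement argument.
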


\begin{proof} Without loss of generality, we suppose that Alice starts with the nontrivial and nondisturbing measurements first, with the POVM elements $M_{A}^{\dagger}M_{A}$ in the computational basis as follows:\\

$M_{A}^{\dagger}M_{A}=\left(
\begin{array}{cccc}
a_{00} & a_{01} & \cdots & a_{0,n_{1}-1}\\
a_{10} & a_{11} & \cdots & a_{1,n_{1}-1} \\
\vdots & \vdots & \ddots & \vdots  \\
a_{n_{1}-1,0} & a_{n_{1}-1,1} & \cdots & a_{n_{1}-1,n_{1}-1} \\
\end{array}
\right).
$

The post-measurement states $\{M_{A}\otimes I_{B}\otimes I_{C}|\phi_{i}\rangle,$ $i=1,\cdots,2(n_{2}+n_{3}-1)-n_{1}\}$ should be mutually orthogonal. Considering the states~$|\phi_{i+(n_{1}-1)}\rangle$ and~$|\phi_{i+2(n_{1}-1)}\rangle$, $i=1,2,\cdots,n_{1}-1$, we have $\langle i|M_{A}^{\dagger}M_{A}|0\rangle \langle0-i|I_{B}|i\rangle \langle0|I_{C}|0-i\rangle=0$, that is,~$\langle i|M_{A}^{\dagger}M_{A}|0\rangle=0$. We have $a_{0i}=a_{i0}=0$ for $i=1,2,\cdots,n_{1}-1$.

For the states~$|\phi_{i+(n_{1}-1)}\rangle$ and~$|\phi_{j+(n_{1}-1)}\rangle$,~$ i\neq j,$ $i,j=1,2,\cdots,n_{1}-1$, we have $\langle i|M_{A}^{\dagger}M_{A}|j\rangle \langle0-i|I_{B}|0-j\rangle \langle0|I_{C}|0\rangle=0$, i.e., $\langle i|M_{A}^{\dagger}M_{A}|j\rangle=0$. Hence, we get $a_{ij}=0$ for $1\leq i\neq j \leq n_{1}-1$.

For the states~$|\phi_{i}\rangle$ and $|\phi_{2(n_{2}+n_{3}-1)-n_{1}}\rangle$, $ i=1,2,\cdots,n_{1}-1$, we obtain $\langle 0-i|M_{A}^{\dagger}M_{A}|0+\cdots+(n_{1}-1)\rangle \langle0|I_{B}|0+\cdots+(n_{2}-1)\rangle \langle i|I_{C}|0+\cdots+(n_{3}-1)\rangle=0$, namely,~$\langle 0-i|M_{A}^{\dagger}M_{A}|0+\cdots+(n_{1}-1)\rangle=0$, which give rise to $a_{00}=a_{ii}$ for all $i=1,2,\cdots,n_{1}-1$.

From the above analysis it turns out that all the Alice's POVM elements $M_{A}^{\dagger}M_{A}$ are proportional to the identity matrix, which shows that Alice cannot perform a nontrivial measurement on her subsystem. Similar results apply to Bob and Charlie. Therefore, the~$2~(n_{2}+n_{3}-1)-n_{1}$ states cannot be locally distinguished by LOCC.

The impossibility of nontrivial measurement for Bob and Charlie can be proved in a similar way, see Appendix B.
\end{proof}

Let us compare our conclusions from Theorem 1 with the existing ones. In Ref. \cite{Wangyl2017}, Wang {\it et al}. constructed ~$2(n_1+n_3)-3$ indistinguishable orthogonal product state in ~$\mathcal{C}^{n_{1}}\otimes\mathcal{C}^{n_{2}}\otimes\mathcal{C}^{n_{3}}~(3\leq n_{1}\leq n_{2}\leq n_{3})$. It is easily verified that if $n_{2}<\frac{3}{2}n_{1}-\frac{1}{2}$, then
$(2(n_{2}+n_{3}-1)-n_{1})<(2(n_{1}+n_{3})-3)$, that is, the nonlocal orthogonal product states sets we constructed have less elements than the previous results. In particular, when the first two subsystems have same dimensions ($n_1=n_2$), our construction has $n_1-1$ states less than the result from \cite{Wangyl2017}. Moreover, when $n_2=3$, our construction has always less states than the one in \cite{Wangyl2017} for arbitrary dimensions $n_1$ and $n_3$.

\section{Constructions in Multipartite Quantum Systems}

Based on the results for tripartite systems, we put forward our new approach in constructing nonlocal orthogonal product states for multipartite quantum systems. For clarity, we deal with the problem for general $n$-partite systems in~$\mathcal{C}^{d_{1}}\otimes \mathcal{C}^{d_{2}}\otimes\cdots\otimes \mathcal{C}^{d_{n}}$ by three detailed cases: $n=0~(mod~3)$, $n=1~(mod~3)$ and $n=2~(mod~3)$.

\begin{theorem} Suppose that $\{|\varphi\rangle_{it}=|x\rangle_{it}|y\rangle_{it}|z\rangle_{it}$, $i=1,2,\cdots, l_{t}\}$ is a set of locally indistinguishable orthogonal product states in $\mathcal{C}^{d_{t}}\otimes \mathcal{C}^{d_{t+1}}\otimes \mathcal{C}^{d_{t+2}}$ $(t=1,4,\cdots,n-2)$. For quantum systems in $\mathcal{C}^{d_{1}}\otimes \mathcal{C}^{d_{2}}\otimes\cdots\otimes \mathcal{C}^{d_{n}}$, with $d_1,d_2,\cdots, d_n\geq3$, $n>6$ and $n\equiv0~(mod~3)$, the following $l_{1}+l_{4}+\cdots+l_{n-2}$ orthogonal product states are indistinguishable under LOCC,
\begin{small}
\begin{equation}
\begin{aligned}
|\phi_{i}\rangle & =|x\rangle_{i1}|y\rangle_{i1}|z\rangle_{i1}|b\rangle_{4}|b\rangle_{5}|b\rangle_{6}
\cdots|b\rangle_{n-3}|a\rangle_{n-2}|a\rangle_{n-1}|a\rangle_{n},\\
&~~~~~ i=1,2,\cdots, l_{1};\\
|\phi_{i+l_{1}}\rangle & =|a\rangle_{1}|a\rangle_{2}|a\rangle_{3}|x\rangle_{i4}|y\rangle_{i4}
|z\rangle_{i4}|b\rangle_{7}\cdots|b\rangle_{n-1}|b\rangle_{n},\\
&~~~~~ i=1,2,\cdots, l_{4};\\
& ~~~~~~~~\cdots~ \cdots ~\cdots~~~~~~~~~~~~\\
|\phi_{i+l_{1}+\cdots+l_{m-3}}\rangle  & =|b\rangle_{1}\cdots|b\rangle_{m-4}
|a\rangle_{m-3}|a\rangle_{m-2}|a\rangle_{m-1}
|x\rangle_{im}|y\rangle_{im}|z\rangle_{im}
\\
&~~~~|b\rangle_{m+3}\cdots|b\rangle_{n},~i=1,2,\cdots, l_{m};\\
& ~~~~~~~~ \cdots ~\cdots~ \cdots~~~~~~~~~~~~~\\
|\phi_{i+l_{1}+\cdots+l_{n-5}}\rangle & =|b\rangle_{1}\cdots|b\rangle_{n-6}|a\rangle_{n-5}|a\rangle_{n-4}
|a\rangle_{n-3}|x\rangle_{i(n-2)}|y\rangle_{i(n-2)}|z\rangle_{i(n-2)},\\ &~~~~i=1,2,\cdots, l_{n-2},
\end{aligned}
\end{equation}
\end{small}
where~$m=7,10,\cdots,n-5$, and~$|a\rangle_{t}|a\rangle_{t+1}|a\rangle_{t+2}$ is orthogonal to~$|b\rangle_{t}|b\rangle_{t+1}|b\rangle_{t+2}$ $(t=1,4,\cdots,n-2)$.
\end{theorem}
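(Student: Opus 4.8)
The plan is to reduce the $n$-partite problem to the tripartite indistinguishability guaranteed by hypothesis, exploiting the blockwise structure over the triples $\{1,2,3\},\{4,5,6\},\dots,\{n-2,n-1,n\}$. Write $N=n/3$ for the number of blocks and index them $B_1,\dots,B_N$; the states that carry the tripartite set $\{|\varphi\rangle_{it}\}$ on block $B_k$ place $|a\rangle^{\otimes 3}$ on the cyclically preceding block $B_{k-1}$ and $|b\rangle^{\otimes 3}$ on every other block. I will call this family the \emph{active family for $B_k$}; one checks against the explicit states that family $1$ is the case $k=1$ with marker $B_0=B_N$. As in the proofs of Lemma 1 and Theorem 1, it suffices to show that whenever any single party goes first with a nontrivial, nondisturbing measurement $\{M_r\}$ keeping all states mutually orthogonal, every $M_r^\dagger M_r$ is forced to be proportional to the identity; then no party can begin the discrimination and the set is indistinguishable under LOCC.

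First I would record mutual orthogonality. Two states in the same active family agree on every block but $B_k$, where they carry orthogonal members of the tripartite set, so they are orthogonal. For two states in the active families for $B_k$ and $B_{k'}$ with $k\neq k'$, I claim some block contributes a vanishing factor. If $k'\neq k-1$, then on the marker block $B_{k-1}$ the first state has $|a\rangle^{\otimes 3}$ while the second has $|b\rangle^{\otimes 3}$, and $\langle a|b\rangle_t\,\langle a|b\rangle_{t+1}\,\langle a|b\rangle_{t+2}=0$ by the assumed orthogonality of $|a\rangle^{\otimes3}$ and $|b\rangle^{\otimes3}$. If instead $k'=k-1$, I pass to the block $B_{k'-1}=B_{k-2}$, on which the $B_{k'}$-family carries $|a\rangle^{\otimes3}$ and the $B_k$-family carries $|b\rangle^{\otimes3}$; this requires $B_{k-2},B_{k-1},B_k$ to be distinct, i.e. $N\geq 3$, which is exactly where the hypothesis $n>6$ enters.

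The heart of the argument is the measurement constraint. Fix a party $p$ lying in block $B_t$ and suppose $p$ measures first with $M_r^\dagger M_r$ in its computational basis. I would use only the orthogonality relations coming from pairs inside the active family for $B_t$. For any two such states every block other than $B_t$ carries the same normalized product vector in both, so those factors equal $1$, and the orthogonality-preservation equation collapses to $\langle x_{it}|M_r^\dagger M_r|x_{i't}\rangle\,\langle y_{it}|y_{i't}\rangle\,\langle z_{it}|z_{i't}\rangle=0$ when $p$ is the first party of the block, and to the analogous equations when $p$ is the second or third. These are precisely the equations appearing in the tripartite indistinguishability argument for $\{|\varphi\rangle_{it}\}$; since that tripartite set is assumed locally indistinguishable for each of its three parties, they force $M_r^\dagger M_r\propto I$. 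As $p$ was arbitrary, no party can begin with a nontrivial measurement, and the claim follows.

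The step I expect to be the main obstacle is making this reduction airtight: I must verify that the cyclic $|a\rangle/|b\rangle$ pattern is genuinely constant across a fixed active family, so that the off-block overlaps are really $1$ and no spurious cross terms survive, and that the invoked tripartite argument covers whichever of the three positions $p$ occupies, not merely the first. The orthogonality bookkeeping across families, together with the role of $n>6$ in keeping three consecutive blocks distinct, is the other delicate point; once both are settled the conclusion is immediate from the tripartite hypothesis.
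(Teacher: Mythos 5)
Your proposal is correct and follows essentially the same route as the paper: restrict to the sub-family of states that carry the tripartite nonlocal set on a given block of three parties, observe that these states coincide on all other blocks so the orthogonality-preservation conditions collapse to the tripartite ones, and conclude that no party can begin with a nontrivial measurement. In fact your write-up supplies details the paper's very terse proof omits — the explicit check of mutual orthogonality via the cyclic $|a\rangle^{\otimes3}/|b\rangle^{\otimes3}$ markers and the identification of $n>6$ (i.e.\ at least three blocks) as what keeps $B_{k-2}$, $B_{k-1}$, $B_k$ distinct — while sharing with the paper the implicit assumption that ``locally indistinguishable'' for the tripartite building blocks means, as in Theorem~1, that every orthogonality-preserving POVM element is proportional to the identity.
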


\begin{proof} To distinguish these states, the first participant must start with a nontrivial measurement. Since  $\{|\varphi\rangle_{it}=|x\rangle_{it}|y\rangle_{it}|z\rangle_{it},$ $ i=1,2,\cdots, l_{t}\}$ is a set of locally indistinguishable orthogonal product states in~$\mathcal{C}^{d_{t}}\otimes \mathcal{C}^{d_{t+1}}\otimes \mathcal{C}^{d_{t+2}}(t=1,4,\cdots,n-2)$, for $|\phi_{i}\rangle$, $i=1,2,\cdots l_{1}$, we can draw the conclusion that any first three participants cannot start with a nontrivial measurement. Similarly, any followed three participants can neither start with a nontrivial measurement for $|\phi_{i+l_{1}}\rangle$, $i=1,2,\cdots l_{4}$. Consequently, for states $|\phi_{i+l_{1}+\cdots+l_{n-5}}\rangle,$ $i=1,2,\cdots l_{n-2}$, the three participants cannot start with a nontrivial measurement either. To sum up, none of the participants can make nontrivial measurements. Therefore, the $l_{1}+l_{4}+\cdots+l_{n-2}$ quantum states in~$\mathcal{C}^{d_{1}}\otimes \mathcal{C}^{d_{2}}\otimes\cdots\otimes \mathcal{C}^{d_{n}}$ $(d_1,d_2,\cdots, d_n\geq3,\, n>6,\, n\equiv0~(mod~3))$ quantum systems cannot be distinguished by LOCC.
\end{proof}

Similarly, we have the following theorems for the rest two cases, see proofs in Appendix C and Appendix D, respectively.

\begin{theorem}Suppose that~$\{|\varphi\rangle_{it}=|x\rangle_{it}|y\rangle_{it}|z\rangle_{it},$ $ i=1,2,\cdots, l_{t}\}$, is a set of locally indistinguishable orthogonal product states in~$\mathcal{C}^{d_{t}}\otimes \mathcal{C}^{d_{t+1}}\otimes \mathcal{C}^{d_{t+2}}$ $(t=1,4,\cdots,n-3)$. And we select~$\{|\varphi\rangle_{in}=|x\rangle_{in}|y\rangle_{in}|z\rangle_{in},$ $ i=1,2,\cdots, l_{n}\}$ is a set of orthogonal product states in~$\mathcal{C}^{d_{1}}\otimes \mathcal{C}^{d_{2}}\otimes \mathcal{C}^{d_{n}}$. The following $l_{1}+l_{4}+\cdots+l_{n}$ orthogonal product states in~$\mathcal{C}^{d_{1}}\otimes \mathcal{C}^{d_{2}}\otimes\cdots\otimes \mathcal{C}^{d_{n}}$ $(d_1,d_2,\cdots,d_ n\geq3,\, n>6,\, n\equiv1~(mod~3))$ are indistinguishable under LOCC,
\begin{small}
\begin{equation}
\begin{aligned}
|\phi_{i}\rangle & =|x\rangle_{i1}|y\rangle_{i1}|z\rangle_{i1}|b\rangle_{4}|b\rangle_{5}
|b\rangle_{6}\cdots|b\rangle_{n-1}|a\rangle_{n}, ~i=1,2,\cdots l_{1};\\
|\phi_{i+l_{1}}\rangle & =|a\rangle_{1}|a\rangle_{2}|a\rangle_{3}|x\rangle_{i4}|y\rangle_{i4}|z\rangle_{i4}
|b\rangle_{7}\cdots|b\rangle_{n-1}|b\rangle_{n},~i=1,2,\cdots l_{4};\\
&~~~~~~~~~\cdots ~\cdots~ \cdots~~~~~~~~~~~~~\\
|\phi_{i+l_{1}+\cdots+l_{m-3}}\rangle & =|b\rangle_{1}\cdots|b\rangle_{m-4}|a\rangle_{m-3}
|a\rangle_{m-2}|a\rangle_{m-1}
|x\rangle_{im}|y\rangle_{im}|z\rangle_{im}\\
&~~~~~|b\rangle_{m+3}\cdots|b\rangle_{n},~i=1,2,\cdots l_{m};\\
&~~~~~~~~~\cdots ~\cdots~ \cdots~~~~~~~~~~~~~\\
|\phi_{i+l_{1}+\cdots+l_{n-3}}\rangle & =|x\rangle_{in}|y\rangle_{in}|b\rangle_{3}\cdots|b\rangle_{n-4}
|a\rangle_{n-3}|a\rangle_{n-2}|a\rangle_{n-1}|z\rangle_{in}, ~~i=1,2,\cdots l_{n},
\end{aligned}
\end{equation}
\end{small}
where $m=7,10,\cdots,n-3$, $|a\rangle_{t}|a\rangle_{t+1}|a\rangle_{t+2}$ is orthogonal to~$|b\rangle_{t}|b\rangle_{t+1}|b\rangle_{t+2}$,~$(t=1,4,\cdots,n-3)$, $|a\rangle_{n}$ and $|b\rangle_{n}$ are orthogonal.
\end{theorem}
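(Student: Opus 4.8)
The plan is to follow the template of the proof of the previous theorem (the case $n\equiv0~(mod~3)$): a set of orthogonal product states is indistinguishable under LOCC provided no party can initiate the protocol, i.e. provided that for every party $p$ the only POVM element $M_p^{\dagger}M_p$ acting on the $p$-th subsystem that keeps all post-measurement states mutually orthogonal is proportional to the identity. So I would first reduce the claim to showing that each of the $n$ parties is \emph{blocked} in this sense, and then treat the parties in two groups according to whether or not they belong to one of the forward tripartite blocks $\{t,t+1,t+2\}$ with $t=1,4,\dots,n-3$.

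For the parties $1,2,\dots,n-1$ the argument is purely structural and reuses the hypothesis that each forward block carries an LOCC-indistinguishable tripartite set. Fix such a party $p$ and let $t$ be the unique forward block containing it. Restricting attention to the $l_t$ global states belonging to that block, every coordinate other than $t,t+1,t+2$ carries one fixed marker vector $|a\rangle$ or $|b\rangle$, so those coordinates contribute overlap factors equal to $1$ and the orthogonality-preservation condition for $p$ collapses exactly to the requirement that $p$'s measurement preserves the orthogonality of the tripartite set $\{|x\rangle_{it}|y\rangle_{it}|z\rangle_{it}\}$ on $\mathcal{C}^{d_t}\otimes \mathcal{C}^{d_{t+1}}\otimes \mathcal{C}^{d_{t+2}}$. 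Since that set is indistinguishable under LOCC by assumption, $p$ cannot start, whatever its role (first, second or third) in the block. In particular the first forward block on $\{1,2,3\}$ blocks parties $1,2,3$, and the blocks at $t=4,7,\dots,n-3$ block parties $4,\dots,n-1$.

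The remaining and genuinely delicate case is party $n$: it lies in no forward block and its coordinate varies only inside the wrap-around block on $\mathcal{C}^{d_1}\otimes \mathcal{C}^{d_2}\otimes \mathcal{C}^{d_n}$, whose tripartite set is assumed only orthogonal, \emph{not} indistinguishable. Here I would exploit two facts already in hand. First, parties $1$ and $2$ are blocked (by the first forward block), so any admissible global measurement is already forced to be trivial on those two subsystems; hence the only freedom left inside the wrap-around block is party $n$. Second, the marker placement is chosen precisely so that each wrap-around state is orthogonal to every forward-block state, the orthogonality of $|a\rangle_t|a\rangle_{t+1}|a\rangle_{t+2}$ against $|b\rangle_t|b\rangle_{t+1}|b\rangle_{t+2}$ and of $|a\rangle_n$ against $|b\rangle_n$ doing the work. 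Writing $E_n=M_n^{\dagger}M_n$ and expanding the orthogonality-preservation conditions, the pairs internal to the wrap-around block give $\langle x_{jn}|x_{in}\rangle\langle y_{jn}|y_{in}\rangle\langle z_{jn}|E_n|z_{in}\rangle=0$, while the cross-block pairs, where party $n$ carries the markers $|a\rangle$ or $|b\rangle$, relate the marker components of $E_n$ to its $z$-components. My plan is to combine these constraints with the span of $\{|a\rangle,|b\rangle,|z_{in}\rangle\}$ in the $n$-th subsystem to pin down the entries of $E_n$ and conclude $E_n\propto I$. I expect this to be the main obstacle: unlike parties $1,\dots,n-1$, party $n$ is not handed a ready-made indistinguishable tripartite set, so its triviality has to be recovered from the interplay between the wrap-around states and the marker states, which is exactly the entry-by-entry computation I would relegate to Appendix C.

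Once every party $1,\dots,n$ is blocked, no party can make an informative first measurement, so no LOCC protocol can even begin to discriminate the states; therefore the $l_1+l_4+\cdots+l_n$ states are indistinguishable under LOCC, which is the assertion of the theorem.
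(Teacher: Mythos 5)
Your overall strategy --- show that no party can open the protocol with a nontrivial orthogonality-preserving POVM --- is the paper's strategy, and your treatment of parties $1,\dots,n-1$ via the forward blocks is exactly what the paper does. The gap is in your plan for party $n$. The cross-block pairs you intend to use give no information about $E_n=M_n^{\dagger}M_n$: whenever a wrap-around state is paired with a forward-block state, their orthogonality is already witnessed on one of the marker triples (the wrap-around state carries $|a\rangle_{n-3}|a\rangle_{n-2}|a\rangle_{n-1}$ or $|b\rangle_t|b\rangle_{t+1}|b\rangle_{t+2}$ exactly opposite the corresponding markers of the forward-block state), so the coordinate-$n$ factor is multiplied by zero and imposes nothing on $E_n$. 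Pairs of forward-block states from distinct blocks at most force $\langle a|E_n|b\rangle=0$ on the two-dimensional span of $|a\rangle_n,|b\rangle_n$. What remains is only the family $\langle x_{jn}|x_{in}\rangle\langle y_{jn}|y_{in}\rangle\langle z_{jn}|E_n|z_{in}\rangle=0$, i.e.\ precisely the constraints that the third party of the tripartite set $\{|x\rangle_{in}|y\rangle_{in}|z\rangle_{in}\}\subset\mathcal{C}^{d_1}\otimes\mathcal{C}^{d_2}\otimes\mathcal{C}^{d_n}$ would face on its own. If that set is merely orthogonal, as the hypothesis literally states, these constraints do not force $E_n\propto I$ (take $l_n=1$, or take the computational product basis, for which $E_n$ may be an arbitrary diagonal matrix); so the entry-by-entry computation you defer to an appendix cannot be completed.

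The resolution --- and what the paper actually does in its Appendix C --- is that the wrap-around set is meant to be itself a locally indistinguishable tripartite set: the paper's proof explicitly invokes Theorem 1 for the states $|\phi_{i+l_1+\cdots+l_{n-3}}\rangle$, so the word ``orthogonal'' in the theorem statement is simply an understatement of the intended hypothesis. With that hypothesis, party $n$ is blocked by the same one-line structural argument as every other party and no separate computation is needed. Note also that your observation that parties $1$ and $2$ are already blocked cannot substitute for this: in an LOCC protocol each party's opening measurement is analyzed on its own, so the triviality of $M_1$ and $M_2$ contributes no additional constraint on $E_n$.
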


\begin{theorem} Suppose that~$\{|\varphi\rangle_{it}=|x\rangle_{it}|y\rangle_{it}|z\rangle_{it}$, $i=1,2,\cdots, l_{t}\}$, is a set of locally indistinguishable orthogonal product states in~$\mathcal{C}^{d_{t}}\otimes \mathcal{C}^{d_{t+1}}\otimes \mathcal{C}^{d_{t+2}}$ $(t=1,4,\cdots,n-4)$. And we select~$\{|\varphi\rangle_{i(n-1)}=|x\rangle_{i(n-1)}|y\rangle_{i(n-1)}|z\rangle_{i(n-1)},$ $ i=1,2,\cdots, l_{n-1}\}$ is a set of orthogonal product states in~$\mathcal{C}^{d_{1}}\otimes \mathcal{C}^{d_{(n-1)}}\otimes \mathcal{C}^{d_{n}}$. The following $l_{1}+l_{4}+\cdots+l_{n-1}$ orthogonal product states in $\mathcal{C}^{d_{1}}\otimes \mathcal{C}^{d_{2}}\otimes\cdots\otimes \mathcal{C}^{d_{n}}$ $(d_1,d_2,\cdots,d_n\geq3,\, n>6,\, n\equiv2~(mod~3))$ are indistinguishable under LOCC,
\begin{small}
\begin{equation}
\begin{aligned}
|\phi_{i}\rangle & =|x\rangle_{i1}|y\rangle_{i1}|z\rangle_{i1}|b\rangle_{4}\cdots|b\rangle_{n-2}
|a\rangle_{n-1}|a\rangle_{n}, ~i=1,2,\cdots l_{1};\\
|\phi_{i+l_{1}}\rangle & =|a\rangle_{1}|a\rangle_{2}|a\rangle_{3}|x\rangle_{i4}|y\rangle_{i4}|z\rangle_{i4}
|b\rangle_{7}\cdots|b\rangle_{n-1}|b\rangle_{n}, ~i=1,2,\cdots l_{4};\\
&~~~~~~~~~\cdots ~\cdots~ \cdots~~~~~~~~~~~~~\\
|\phi_{i+l_{1}+\cdots+l_{m-3}}\rangle & =|b\rangle_{1}\cdots|b\rangle_{m-4}
|a\rangle_{m-3}|a\rangle_{m-2}|a\rangle_{m-1}|x\rangle_{im}|y\rangle_{im}|z\rangle_{im}
\\&~~~~ |b\rangle_{m+3}\cdots|b\rangle_{n},~i=1,2,\cdots l_{m};\\
&~~~~~~~~~\cdots ~\cdots~ \cdots~~~~~~~~~~~~~\\
|\phi_{i+l_{1}+\cdots+l_{n-4}}\rangle & =|x\rangle_{i(n-1)}|b\rangle_{2}\cdots|b\rangle_{n-5}
|a\rangle_{n-4}|a\rangle_{n-3}|a\rangle_{n-2}|y\rangle_{i(n-1)}|z\rangle_{i(n-1)},
\\&~~~~ i=1,2,\cdots l_{n-1},
\end{aligned}
\end{equation}
\end{small}
where $m=7,10,\cdots,n-4$ and $|a\rangle_{t}|a\rangle_{t+1}|a\rangle_{t+2}$ is orthogonal to~$|b\rangle_{t}|b\rangle_{t+1}|b\rangle_{t+2}$ $(t=1,4,\cdots,n-4)$, $|a\rangle_{n-1}|a\rangle_{n}$ and $|b\rangle_{n-1}|b\rangle_{n}$ are orthogonal..
\end{theorem}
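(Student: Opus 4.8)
The plan is to prove Theorem 4 (the $n\equiv 2\pmod 3$ case) by the same reduction-to-subsystems argument that carried Theorem 2 (the $n\equiv 0\pmod 3$ case), adapting it to handle the one ``straddling'' block that wraps around parties $1, n-1, n$. The overall strategy is to show that no party can initiate a nontrivial, orthogonality-preserving measurement; by the standard argument this forces every admissible first measurement to be proportional to the identity, and hence the states cannot be distinguished by LOCC. I would organize the parties into the groups dictated by the construction: the consecutive triples $(t,t+1,t+2)$ for $t=1,4,\dots,n-4$, together with the exceptional triple $\{1,n-1,n\}$ appearing in the last family $|\phi_{i+l_1+\cdots+l_{n-4}}\rangle$.

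First I would fix an arbitrary party and suppose it performs the first measurement, with POVM element $M^\dagger M$ written in the computational basis. The key observation, exactly as in Theorem 2, is that for any one of the consecutive triples $(t,t+1,t+2)$ the restriction of the global states to those three tensor factors runs over the full locally indistinguishable set $\{|x\rangle_{it}|y\rangle_{it}|z\rangle_{it}\}$ from $\mathcal{C}^{d_t}\otimes\mathcal{C}^{d_{t+1}}\otimes\mathcal{C}^{d_{t+2}}$, while every other factor is frozen to a fixed product vector ($|a\rangle$ or $|b\rangle$) that is common to those states. Hence the orthogonality constraints on the post-measurement states, restricted to that triple, reproduce verbatim the constraints analyzed in the proof of Lemma~1/Lemma~2 underlying the hypothesis, and so any party sitting inside a consecutive triple is forced to act trivially.

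The genuinely new step is the exceptional triple $\{1,n-1,n\}$. Here I would invoke the hypothesis that $\{|x\rangle_{i(n-1)}|y\rangle_{i(n-1)}|z\rangle_{i(n-1)}\}$ is a locally indistinguishable orthogonal product set in $\mathcal{C}^{d_1}\otimes\mathcal{C}^{d_{n-1}}\otimes\mathcal{C}^{d_n}$; the family $|\phi_{i+l_1+\cdots+l_{n-4}}\rangle$ places $|x\rangle_{i(n-1)}$ on party $1$, $|y\rangle_{i(n-1)}$ on party $n-1$, and $|z\rangle_{i(n-1)}$ on party $n$, while all intermediate parties $2,\dots,n-2$ carry fixed vectors (a prefix of $|b\rangle$'s followed by the orthogonal block $|a\rangle_{n-4}|a\rangle_{n-3}|a\rangle_{n-2}$). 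The point is that party $1$, party $n-1$, and party $n$ each appear as the ``acting'' factor of this locally indistinguishable subsystem set, so the same orthogonality-preservation argument again forces the corresponding $M^\dagger M$ to be proportional to the identity. I would also note that the condition ``$|a\rangle_{n-1}|a\rangle_n$ and $|b\rangle_{n-1}|b\rangle_n$ are orthogonal'' guarantees cross-orthogonality between the first family and later families on parties $n-1, n$, so no spurious constraints are lost and the subsystem sets are genuinely orthogonal within the global set.

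Concluding, once every party is shown to be blocked from any nontrivial first measurement, the LOCC protocol can make no progress and the $l_1+l_4+\cdots+l_{n-1}$ states are indistinguishable. The main obstacle I anticipate is bookkeeping rather than conceptual: one must verify carefully that for the exceptional triple the ``frozen'' factors on parties $2,\dots,n-2$ are truly identical across all states of that last family (so that the reduced problem is exactly a copy of the $\mathcal{C}^{d_1}\otimes\mathcal{C}^{d_{n-1}}\otimes\mathcal{C}^{d_n}$ set and no extra orthogonality relations leak in), and that the wrap-around indexing of the tensor factors does not disturb the grouping into disjoint acting-triples. This is precisely the content deferred to Appendix~D, and I expect the proof there to mirror the Theorem~2 argument line by line, with the single additional paragraph treating the $\{1,n-1,n\}$ block via the $\mathcal{C}^{d_1}\otimes\mathcal{C}^{d_{n-1}}\otimes\mathcal{C}^{d_n}$ hypothesis.
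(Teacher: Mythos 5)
Your proposal is correct and follows essentially the same route as the paper's Appendix D: group the parties into the consecutive acting triples $(t,t+1,t+2)$, $t=1,4,\dots,n-4$, each handled by the local indistinguishability of the corresponding tripartite set with all other factors frozen, and treat the wrap-around block $\{1,n-1,n\}$ via the exceptional set in $\mathcal{C}^{d_1}\otimes\mathcal{C}^{d_{n-1}}\otimes\mathcal{C}^{d_n}$. Your version is in fact more explicit than the paper's about why the frozen $|a\rangle$/$|b\rangle$ factors reduce the orthogonality constraints to the subsystem problem.
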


As applications we consider a simple example in \cite{Xu2020}. Take $n=9$ and $d=3$ , according to \cite{Xu2020}, one has the sets of nonlocal orthogonal product states, containing 36 and 28 states, respectively. Nevertheless, from Theorem 2 we only need 21 states to construct the set of nonlocal orthogonal product states.

Furthermore, in Ref. \cite{Xu2020} the construction of nonlocal orthogonal product states with $n(2d-3)+1$ members is presented  in $\otimes_{j=1}^n C^d$. According to Theorem 2, we present $\frac{n}{3}(3d-2)$ orthogonal product states in $\otimes_{j=1}^n C^d$ for $n\equiv0~(mod~3)(n>6)$. Similarly, the nonlocal orthogonal product state sets with $\frac{n+2}{3}(3d-2)$ and $\frac{n+1}{3}(3d-2)$ members are constructed by Theorem 3 and Theorem 4 in $\otimes_{j=1}^n C^d$  for $n\equiv1~(mod~3)$ and $ n\equiv2~(mod~3)$ $(d\geq3,~n>6)$, respectively. It is easily proved that our nonlocal sets always have less numbers than the given results.

\section{Conclusions and discussions}

Local discrimination of quantum states has
attracted much attention during the last twenty years.
The local distinguishability of quantum states can be applied to design quantum protocols such as quantum cryptography \cite{Guo2001,Rahaman2015,Wangjt2017,Yang2015}.
The construction of sets of locally indistinguishable multipartite orthogonal product
states with less members is more difficult than bipartite ones.
We have presented improved constructions of locally indistinguishable orthogonal product states in multipartite systems in a simpler and more effective way.
We have constructed $3d-2$ nonlocal orthogonal product states in $\mathcal{C}^d\otimes \mathcal{C}^d\otimes \mathcal{C}^d$ $(d\geq3)$ and $3d+4$  nonlocal orthogonal product states in $\mathcal{C}^d\otimes \mathcal{C}^{d+1}\otimes \mathcal{C}^{d+2}$ $(d\geq3)$. We have found further $2~(n_{2}+n_{3}-1)-n_{1}$ nonlocal orthogonal product states in the tripartite quantum system $\mathcal{C}^{n_{1}}\otimes\mathcal{C}^{n_{2}}\otimes\mathcal{C}^{n_{3}}$ $(3\leq n_{1}\leq n_{2}\leq n_{3})$ with arbitrary dimensions of individual subsystems. Based on the tripartite constructions, we have put forward our recursive construction for arbitrary multipartite systems $\mathcal{C}^{d_{1}}\otimes \mathcal{C}^{d_{2}}\otimes\cdots\otimes \mathcal{C}^{d_{n}}$ $(d_1,d_2,\cdots, d_n\geq3,\, n>6)$. Above all, the LOCC indistinguishable sets we constructed contain less members of states than the existing ones, which optimize further the recent results and would highlight the related researches in quantum information processing.

\begin{acknowledgements}
This work is supported by NSFC (Grant No. 11871019, 12075159), Natural Science Foundation of Hebei Province (F2021205001), Key Project of Beijing Municipal Commission of Education (KZ201810028042), Beijing Natural Science Foundation (Z190005), Academy for Multidisciplinary Studies, Capital Normal University, and Shenzhen Institute for Quantum Science and Engineering, Southern University of Science and Technology, Shenzhen 518055, China (No. SIQSE202001).
\end{acknowledgements}


\begin{thebibliography}{11}
%
%
\bibitem{Peres1991} A Peres and W. K. Wootters, Optimal detection of quantum information, Phys. Rev. Lett. 66 1119 (1991)

\bibitem{Nielsen1998} M. A. Nielsen, Conditions for a class of entanglement transformations,  Phys. Rev. Lett.  83, 436 (1998)

\bibitem{Bennett1999} C. H. Bennett, D. P. Divincenzo, C. A. Fuchs, T. Mor, E. Rains, P. W. Shor, J. A. Smolin, and W. K. Wootters, Quantum nonlocality without entanglement,  Phys. Rev. A 59, 1070 (1999)

\bibitem{Li2009} D. C. Li, Entanglement transformations between two-qubit mixed states by LOCC,  Phys. Lett. A  373, 3610 (2009)

\bibitem{Bandy2011} S. Bandyopadhyay, S. Ghosh, and G. Kar,  LOCC distinguishability of unilaterally transformable quantum states, New J. Phys. 13, 123013 (2011)

\bibitem{Xu2016} G. B. Xu, Q. Y. Wen, S. J. Qin, Y. H. Yang, and F. Gao, Quantum nonlocality of multipartite orthogonal product states, Phys. Rev. A 93, 032341 (2016)

\bibitem{Walgate2000} J. Walgate, A. J. Shor, L. Hardy, and V. Vedral, Local distinguishability of multipartite orthogonal quantum states, Phys. Rev. Lett. 85, 4972 (2000)

\bibitem{Horo2003} M. Horodecki, A. Sen(De), U. Sen, and K. Horodecki, Local Indistinguishablity: more nonlocality with less entanglement, Phys. Rev. Lett. 90, 047902 (2003)

\bibitem{Chen2004} P. X. Chen and C. Z. Li, Distinguishing the elements of a full product basis set needs only projective measurements and classical communication, Phys. Rev. A 70, 022306 (2004)

\bibitem{Feng2009} Y. Feng and Y. Y. Shi, Characterizing locally indistinguishability orthogonal product states, IEEE Trans. Info. Theory 55, 2799 (2009)

\bibitem{Yu2012} N. K. Yu, R. Y. Duan, and M. S. Ying, Four locally indistinguishable ququad-ququad orthogonal maximally entangled states, Phys. Rev. Lett.  109, 020506 (2012)

\bibitem{Nathanson2013} M. Nathanson, Three maximally entangled states can require two-way local operations and classical communication for local discrimination, Phys. Rev. A 88, 062316 (2013)

\bibitem{Yang2013} Y. H. Yang, F. Gao, G. J. Tian, T. Q. Cao, and Q. Y. Wen, Local distinguishability of orthogonal quantum states in a ~$2\otimes 2\otimes 2$ system, Phys. Rev. A 88, 024301 (2013)

\bibitem{Zhangzc2013} Z. C. Zhang, Q. Y. Wen, F. Gao, G. J. Tian, and T. Q. Cao, One-way LOCC indistinguishability of maximally entangled states, Quantum Inf. Process. 13, 795 (2013)

\bibitem{Yu} S. X. Yu and C. H. Oh, Detecting the local indistinguishability of maximally entangled states, arXiv:1502.01274

\bibitem{Zhangzc2014} Z. C. Zhang, F. Gao, G. J. Tian, T. Q. Cao, and Q. Y. Wen, Nonlocality of orthogonal product basis quantum states, Phys. Rev. A 90, 022313 (2014)



\bibitem{Zhangzc2015} Z. C. Zhang, F. Gao, Y. H. Qin, and Q. Y. Wen, Nonlocality of orthogonal product states, Phys. Rev. A 92, 012332 (2015)

\bibitem{Wangyl2015} Y. L. Wang, M. S. Li, Z. J. Zheng, and S. M. Fei, Nonlocality of orthogonal product-basis quantum states, Phys. Rev. A 92, 032313 (2015)



\bibitem{Zhangzc2016} Z. C. Zhang, F. Gao, Y. Cao, S. J. Qin, and Q. Y. Wen, Local indistinguishability of orthogonal product states, Phys. Rev. A 93, 012314 (2016)

\bibitem{Niset2006} J. Niset and N. J. Cerf, Multipartite nonlocality without entanglement in many dimensions, Phys. Rev. A 74, 052103 (2006)

\bibitem{Wangyl2017} Y. L. Wang, M. S. Li, Z. J. Zheng, and S. M. Fei, The local indistinguishablity of multipartite product states, Quantum Info. Proc. 16, 5 (2017)

\bibitem{Zhangzc2018} Z. C. Zhang, Y. Q. Song, T. T. Song, F. Gao, S. J. Qin, and Q. Y. Wen, Local distinguiability of orthogonal quantum states with mulitiple copies of $2\otimes2$ maximally entangled states, Phys. Rev. A 97, 022334 (2018)

\bibitem{Li2019} Y. J. Li, F. Gao, Z. C. Zhang, and Q. Y. Wen, Using entanglement more efficiently in distinguishing orthogonal product states by LOCC, Quantum Info. Proc. 18, 330 (2019)

\bibitem{Xu2020} D. H. Jiang, G. B. Xu, Nonlocal sets of orthogonal product states in an arbitrary multipartite quantum system, Phys. Rev. A 102, 032211 (2020)

\bibitem{Xu2021} G. B. Xu, D. H. Jiang. Novel methods to construct nonlocal sets of orthogonal product states in an arbitrary bipartite high-dimensional system. Quantum Inf. process. 20, 128 (2021)

\bibitem{Halder2018} S. Halder, Several nonlocal sets of multipartite pure orthognal product states, Phys. Rev. A 98, 022303 (2018)



\bibitem{Guo2001} G. P. Guo, C. F. Li, et al., Quantum key distribution
scheme with orthogonal product states. Phys. Rev. A 64, 042301 (2001)

\bibitem{Rahaman2015} R. Rahaman and M. G. Parker, Quantum scheme for secret sharing based on local distinguishablity, Phys. Rev. A 91, 022330 (2015)

\bibitem{Wangjt2017} J. T. Wang, L. X. Li, P. H. Peng, and Y. X. Yang, Quantum-secret-sharing scheme based on local distinguishability of orthogonal multiqudit entangled states, Phys. Rev. A 95, 022320 (2017)

\bibitem{Yang2015} Y. H. Yang, F. Gao, X. Wu, S. J. Qin, H. J. Zuo, and Q. Y. Wen, Quantum secret sharing via local operations and classical communication, Sci. Rep. 5, 16967 (2015)



\bibitem{Zhangzc2017} Z. C. Zhang, K. J. Zhang, F. Gao, Q. Y. Wen, and C. H. Oh, Construction of nonlocal multipartite quantum states, Phys. Rev. A 95, 052344 (2017)

\bibitem{Zhangzhc2015} Z. C. Zhang, K. Q. Feng, F. Gao, and Q. Y. Wen, Distinguishing maximally entangled states by one-way local operations and classical communication, Phys. Rev. A 91, 012329 (2015)

\end{thebibliography}


\bigskip
\section*{APPENDIX}
\setcounter{equation}{0} \renewcommand%
\theequation{A\arabic{equation}}

{\bf A. Proof of Lemma 2}

\begin{proof}
~$(1)$ When Alice starts with the nontrivial and nondisturbing measurement~$M_{A}^{\dagger}M_{A}$, we write the POVM elements $M_{A}^{\dagger}M_{A}$ in the~$\{|0\rangle,|1\rangle,\cdots,|d-1\rangle\}_{A}$ basis:
$$ M_{A}^{\dagger}M_{A}=\left(
                                   \begin{array}{cccc}
                                     a_{00} & a_{01} & \cdots & a_{0,d-1}\\
                                     a_{10} & a_{11} & \cdots & a_{1,d-1} \\
                                     \vdots & \vdots & \ddots & \vdots  \\
                                     a_{d-1,0} & a_{d-1,1} & \cdots & a_{d-1,d-1} \\
                                   \end{array}
                                 \right),
 $$

The postmeasurement states~$\{M_{A}\otimes I_{B}\otimes I_{C}|\phi_{i}\rangle, i=1,\cdots,3d+4\}$ should be mutually orthogonal. Consider the states $|\phi_{i+(d-1)}\rangle$ and ~$|\phi_{i+2(d-1)}\rangle$,~$i=1,2,\cdots,d-1$, we have $\langle i|M_{A}^{\dagger}M_{A}|0\rangle \langle0-i|I_{B}|i\rangle \langle0|I_{C}|0-i\rangle=0$, i.e., ~$\langle i|M_{A}^{\dagger}M_{A}|0\rangle$ $=0$, so~$a_{0i}=a_{i0}=0, i=1,2,\cdots,d-1$.

For the states~$|\phi_{i+(d-1)}\rangle$ and $|\phi_{j+(d-1)}\rangle$, $ i\neq j,$ $i,j=1,2,\cdots,d-1$, we have $\langle i|M_{A}^{\dagger}M_{A}|j\rangle \langle0-i|I_{B}|0-j\rangle \langle0|I_{C}|0\rangle=0$, that is, $\langle i|M_{A}^{\dagger}M_{A}|j\rangle=0$, so $a_{ij}=0$, $i\neq j,$ $i,j=1,2,\cdots,d-1$.

For the states~$|\phi_{i}\rangle$ and~$|\phi_{3(d-1)+7}\rangle$, $ i=1,2,\cdots,d-1$, we have $\langle 0-i|M_{A}^{\dagger}M_{A}|0+\cdots+(d-1)\rangle \langle0|I_{B}|0+\cdots+d\rangle \langle i|I_{C}|0+\cdots+(d+1)\rangle=0$, that is,  $\langle 0-i|M_{A}^{\dagger}M_{A}|0+\cdots+(d-1)\rangle=0$, so $a_{00}=a_{ii}=0$, $i=1,2,\cdots,d-1$.

Therefore, all of Alice's POVM elements $M_{A}^{\dagger}M_{A}$ are proportional to the identity operator and Alice cannot start with a nontrivial  measurement.

~$(2)$ As for Bob, we write the POVM elements $M_{B}^{\dagger}M_{B}$ in the~$\{|0\rangle,|1\rangle,\cdots,$ $|d\rangle\}_{B}$ basis:
$$ M_{B}^{\dagger}M_{B}=\left(
\begin{array}{ccccc}
b_{00} & b_{01} & \cdots & b_{0,d-1} & b_{0,d}\\
b_{10} & b_{11} & \cdots & b_{1,d-1} & b_{1,d}\\
\vdots & \vdots & \ddots & \vdots   & \vdots\\
b_{d-1,0} & b_{d-1,1} & \cdots & b_{d-1,d-1} & b_{d-1,d}\\
b_{d0} & b_{d1} & \cdots & b_{d,d-1} & b_{dd}\\
\end{array}
\right),
$$

The postmeasurement states~$\{I_{A}\otimes M_{B}\otimes I_{C}|\phi_{i}\rangle, i=1,\cdots,3d+4\}$ should be mutually orthogonal. Consider the states $|\phi_{i}\rangle$ and ~$|\phi_{i+2(d-1)}\rangle, i=1,\cdots,d-1$, we have $\langle0-i|I_{A}|0\rangle \langle0|M_{B}^{\dagger}M_{B}|i\rangle \langle i|I_{C}|0-i\rangle=0$, i.e., $\langle 0|M_{B}^{\dagger}M_{B}|i\rangle=0$, so~$b_{0i}=b_{i0}=0$, $i=1,\cdots,d-1$.

For the states~$|\phi_{i+2(d-1)}\rangle$ and~$|\phi_{j+2(d-1)}\rangle, i\neq j,~~i,j=1,2,\cdots,d-1$, we have $\langle0|I_{A}|0\rangle \langle i|M_{B}^{\dagger}M_{B}|j\rangle \langle0-i|I_{C}|0-j\rangle=0$, that is, $\langle i|M_{B}^{\dagger}M_{B}|j\rangle=0$, so $b_{ij}=b_{ji}=0$, $i\neq j,$ $i,j=1,2,\cdots,d-1$.

For the states $|\phi_{i+2(d-1)}\rangle$ and~$|\phi_{3(d-1)+2}\rangle, i=1,2,\cdots,d-1$, we have $\langle0|I_{A}|0-1\rangle \langle i|M_{B}^{\dagger}M_{B}|d\rangle \langle0-i|I_{C}|0\rangle=0$, that is, $\langle i|M_{B}^{\dagger}M_{B}|d\rangle=0$, so $b_{id}=b_{di}=0$, $i=1,2,\cdots,d-1$.

For the states $|\phi_{1+(d-1)}\rangle$ and~$|\phi_{3(d-1)+2}\rangle, i=1,2,\cdots,d-1$,  we have $\langle1|I_{A}|0-1\rangle \langle 0-1|M_{B}^{\dagger}M_{B}|d\rangle \langle0|I_{C}|0\rangle=0$, that is, $\langle 0-1|M_{B}^{\dagger}M_{B}|d\rangle=0$, so $b_{0d}=b_{d0}=b_{1d}=0$.

For the states $|\phi_{i+(d-1)}\rangle$ and $|\phi_{3(d-1)+7}\rangle, i=1,\cdots,d-1$, we have $\langle i|I_{A}|0+\cdots+(d-1)\rangle \langle0-i|M_{B}^{\dagger}M_{B}|0+\cdots+d\rangle \langle 0|I_{C}|0+\cdots+(d+1)\rangle=0$, that is, $\langle 0-i|M_{B}^{\dagger}M_{B}|0+\cdots+d\rangle=0$, so~$b_{00}=b_{ii}, i=1,\cdots,d-1$. And for the states~$|\phi_{3(d-1)+4}\rangle$ and~$|\phi_{3(d-1)+7}\rangle$, we also have $\langle (d-2)|I_{A}|0+\cdots+(d-1)\rangle \langle(d-1)-d|M_{B}^{\dagger}M_{B}|0+\cdots+d\rangle \langle (d-2)|I_{C}|0+\cdots+(d+1)\rangle=0$, i.e.,~$\langle (d-1)-d|M_{B}^{\dagger}M_{B}|0+\cdots+d\rangle=0$, therefore~$b_{d-1,d-1}=b_{dd}$. Thus we assert that the diagonal elements of~$M_{B}^{\dagger}M_{B}$ are equal.

Therefore, all of Bob's POVM elements $M_{B}^{\dagger}M_{B}$ are proportional to the identity and Bob cannot start with a nontrivial  measurement.

~$(3)$ As for Charlie, we write the POVM elements $M_{C}^{\dagger}M_{C}$ in the $\{|0\rangle,|1\rangle,\cdots,$ $|d+1\rangle\}_{C}$ basis:
$$ M_{C}^{\dagger}M_{C}=\left(
\begin{array}{ccccc}
c_{00} &  \cdots & c_{0,d-1} & c_{0d} & c_{0,d+1}\\
c_{10} &  \cdots & c_{1,d-1} & c_{1d} & c_{1,d+1}\\
\vdots &  \ddots & \vdots   & \vdots & \vdots \\
c_{d-1,0}& \cdots&c_{d-1,d-1}&c_{d-1,d}&c_{d-1,d+1}\\
c_{d 0} &\cdots &c_{d,d-1}&c_{d d}& c_{d,d+1}\\
c_{d+1,0}&\cdots&c_{d+1,d-1}&c_{d+1,d}&c_{d+1,d+1}\\
\end{array}
\right),
 $$

The postmeasurement states~$\{I_{A}\otimes I_{B}\otimes M_{C}|\phi_{i}\rangle, i=1,\cdots,3d+4\}$ should be mutually orthogonal. Consider the states $|\phi_{i}\rangle$ and ~$|\phi_{i+(d-1)}\rangle, i=1,\cdots,d-1$, we have $\langle0-i|I_{A}|i\rangle \langle0|I_{B}|0-i\rangle \langle i|M_{C}^{\dagger}M_{C}|0\rangle=0$, i.e., $\langle i|M_{C}^{\dagger}M_{C}|0\rangle=0$, so $c_{0i}=c_{i0}=0$, $i=1,\cdots,d-1$.

For the states $|\phi_{i}\rangle$ and~$|\phi_{j}\rangle$, $i\neq j$, $i,j=1,2,\cdots,d-1$, we have $\langle0-i|I_{A}|0-j\rangle \langle0|I_{B}|0\rangle \langle i|M_{C}^{\dagger}M_{C}|j\rangle=0$, that is, $\langle i|M_{C}^{\dagger}M_{C}|j\rangle=0$, so $c_{ij}=c_{ji}=0$, $i\neq j$, $i,j=1,2,\cdots,d-1$.

For the states~$|\phi_{1+(d-1)}\rangle$ and~$|\phi_{3(d-1)+1}\rangle$, we have $\langle1|I_{A}|0-1\rangle \langle0-1|I_{B}|0\rangle$ $\langle 0|M_{C}^{\dagger}M_{C}|d\rangle=0,$ i.e., $\langle 0|M_{C}^{\dagger}M_{C}|d\rangle=0$, so $c_{0d}=c_{d0}=0$.

For the states $|\phi_{i}\rangle$ and $|\phi_{3(d-1)+1}\rangle$, $i=1,2,\cdots,d-1$, we have $\langle0-i|I_{A}|0-1\rangle \langle0|I_{B}|0\rangle \langle i|M_{C}^{\dagger}M_{C}|d\rangle=0$, that is, $\langle i|M_{C}^{\dagger}M_{C}|d\rangle=0$, so $c_{id}=c_{di}=0$, $i=1,2,\cdots,d-1$.

For the states $|\phi_{1+(d-1)}\rangle$ and~$|\phi_{3(d-1)+5}\rangle$, we have $\langle1|I_{A}|0-1\rangle \langle0-1|I_{B}|0\rangle$ $\langle 0|M_{C}^{\dagger}M_{C}|(d+1)\rangle=0$, that is, $\langle 0|M_{C}^{\dagger}M_{C}|(d+1)\rangle=0$, so $c_{0,d+1}=c_{d+1,0}=0$.

For the states $|\phi_{i}\rangle$ and~$|\phi_{3(d-1)+5}\rangle$, $i=1,2,\cdots,d-1$, we have $\langle0-i|I_{A}|0-1\rangle \langle0|I_{B}|0\rangle \langle i|M_{C}^{\dagger}M_{C}|(d+1)\rangle=0$, i.e., $\langle i|M_{C}^{\dagger}M_{C}|(d+1)\rangle=0$, so $c_{i,d+1}=c_{d+1,i}=0$, $i=1,2,\cdots,d-1$.

For the states~$|\phi_{3(d-1)+1}\rangle$ and $|\phi_{3(d-1)+5}\rangle$, $i=1,2,\cdots,d-1$, we have $\langle0-1|I_{A}|0-1\rangle \langle0|I_{B}|0\rangle \langle d|M_{C}^{\dagger}M_{C}|d+1\rangle=0$, that is, $\langle d|M_{C}^{\dagger}M_{C}|d+1\rangle=0$, so $c_{d,d+1}=c_{d+1,d}=0$.

For the states $|\phi_{i+2(d-1)}\rangle$ and~$|\phi_{3(d-1)+7}\rangle$, $i=1,\cdots,d-1$, we have $\langle0|I_{A}|0+\cdots+(d-1)\rangle \langle i|I_{B}|0+\cdots+d\rangle \langle 0-i|M_{C}^{\dagger}M_{C}|0+\cdots+(d+1)\rangle=0$, that is, $\langle 0-i|M_{C}^{\dagger}M_{C}|0+\cdots+(d+1)\rangle=0$, so $c_{00}=c_{ii}$, $i=1,\cdots,d-1$.

For the states $|\phi_{3(d-1)+3}\rangle$ and $|\phi_{3(d-1)+7}\rangle$, we have $\langle(d-2)|I_{A}|0+\cdots+(d-1)\rangle \langle (d-2)|I_{B}|0+\cdots+d\rangle \langle (d-1)-d|M_{C}^{\dagger}M_{C}|0+\cdots+(d+1)\rangle=0$, i.e., $\langle (d-1)-d|M_{C}^{\dagger}M_{C}|0+\cdots+(d+1)\rangle=0$, so $c_{d-1,d-1}=c_{dd}$. And for the states $|\phi_{3(d-1)+6}\rangle$ and~$|\phi_{3(d-1)+7}\rangle$, we have $\langle(d-1)|I_{A}|0+\cdots+(d-1)\rangle \langle (d-1)|I_{B}|0+\cdots+d\rangle \langle d-(d+1)|M_{C}^{\dagger}M_{C}|0+\cdots+(d+1)\rangle=0$, that is, $\langle d-(d+1)|M_{C}^{\dagger}M_{C}|0+\cdots+(d+1)\rangle=0$, so $c_{dd}=c_{d+1,d+1}$. Thus the diagonal elements of $M_{C}^{\dagger}M_{C}$ are equal.

Therefore, all of Charlie's POVM elements $M_{C}^{\dagger}M_{C}$ are proportional to the identity matrices and Charlie cannot start with a nontrivial measurement.

To sum up, all of the three participators Alice, Bob and Charlie cannot start with a nontrivial measurement. So, the~$3d+4$ states cannot be perfectly distinguished by LOCC.
\end{proof}

\noindent {\bf B. The impossibility of nontrivial measurement for Bob and Charlie}

As for Bob, we write the POVM elements $M_{B}^{\dagger}M_{B}$ in the basis $\{|0\rangle,|1\rangle,|2\rangle,\cdots,$ $|n_{2}-1\rangle\}_{B}$:

$\left(
\begin{array}{cccccc}
b_{00}&\cdots&b_{0,n_{1}-1}&b_{0n_{1}}&\cdots&b_{0,n_{2}-1}\\
b_{10}&\cdots&b_{1,n_{1}-1}&b_{1n_{1}}&\cdots& b_{1,n_{2}-1}\\
\vdots&    &\vdots  &\vdots&       &\vdots\\
b_{n_{1}-1,0}&\cdots&b_{n_{1}-1,n_{1}-1}&b_{n_{1}-1,n_{1}}&\cdots&b_{n_{1}-1,n_{2}-1}\\
b_{n_{1}0} &\cdots&b_{n_{1},n_{1}-1}&b_{n_{1}n_{1}}&\cdots&b_{n_{1},n_{2}-1}\\
\vdots&   &\vdots  &\vdots&       &\vdots\\
b_{n_{2}-1,0}&\cdots&b_{n_{2}-1,n_{1}-1}&b_{n_{2}-1,n_{1}}&\cdots&b_{n_{2}-1,n_{2}-1}\\
\end{array}
\right),
 $
The postmeasurement states $\{I_{A}\otimes M_{B}\otimes I_{C}|\phi_{i}\rangle$, $i=1$, $\cdots$, $2(n_{2}+n_{3}-1)-n_{1}\}$ should be mutually orthogonal. Consider the states $|\phi_{i}\rangle$ and $|\phi_{i+2(n_{1}-1)}\rangle$, $i=1,\cdots,n_{1}-1$, we have $\langle0-i|I_{A}|0\rangle \langle0|M_{B}^{\dagger}M_{B}|i\rangle \langle i|I_{C}|0-i\rangle=0$, that is, $\langle 0|M_{B}^{\dagger}M_{B}|i\rangle=0$. Hence, $b_{0i}=b_{i0}=0$, $i=1,\cdots,n_{1}-1$.

For the states $|\phi_{i+2(n_{1}-1)}\rangle$ and $|\phi_{j+2(n_{1}-1)}\rangle$, $i\neq j$, $i,j=1,2,\cdots,n_{1}-1$, we have $\langle0|I_{A}|0\rangle \langle i|M_{B}^{\dagger}M_{B}|j\rangle \langle0-i|I_{C}|0-j\rangle=0$, that is, $\langle i|M_{B}^{\dagger}M_{B}|j\rangle=0$. Hence, $b_{ij}=b_{ji}=0$, $i\neq j$, $i,j=1,2,\cdots,n_{1}-1$.

For the states $|\phi_{i+2(n_{1}-1)+(n_{2}-n_{1})}\rangle$ and $|\phi_{j+2(n_{1}-1)+(n_{2}-n_{1})}\rangle, i\neq j$, $i,j=n_{1},\cdots,n_{2}-1$, we have $\langle0-1|I_{A}|0-1\rangle \langle i|M_{B}^{\dagger}M_{B}|j\rangle \langle0|I_{C}|0\rangle=0$, that is, $\langle i|M_{B}^{\dagger}M_{B}|j\rangle=0$. Hence, $b_{ij}=b_{ji}=0$, $i\neq j$, $i,j=n_{1},\cdots,n_{2}-1$.

For the states $|\phi_{i+2(n_{1}-1)}\rangle$ and $|\phi_{j+2(n_{1}-1)+(n_{2}-n_{1})}\rangle$, $i=1,\cdots,n_{1}-1$; $j=n_{1},\cdots,n_{2}-1$, we have $\langle0|I_{A}|0-1\rangle \langle i|M_{B}^{\dagger}M_{B}|j\rangle \langle 0-i|I_{C}|0\rangle=0$, i.e., $\langle i|M_{B}^{\dagger}M_{B}|j\rangle=0$. Hence, $b_{ij}=b_{ji}=0$ for $i=1,\cdots,n_{1}-1$ and $j=n_{1},\cdots,n_{2}-1$.

For the states $|\phi_{1+(n_{1}-1)}\rangle$ and $|\phi_{j+2(n_{1}-1)+(n_{2}-n_{1})}\rangle, j=n_{1},\cdots,n_{2}-1$, we have $\langle1|I_{A}|0-1\rangle \langle 0-1|M_{B}^{\dagger}M_{B}|j\rangle \langle 0|I_{C}|0\rangle=0$, i.e., $\langle 0-1|M_{B}^{\dagger}M_{B}|j\rangle=0$. Therefore, $b_{0j}=b_{j0}=b_{1j}=0$, $j=n_{1},\cdots,n_{2}-1$.

For the states $|\phi_{i+(n_{1}-1)}\rangle$ and $|\phi_{2(n_{2}+n_{3}-1)-n_{1}}\rangle$, $i=1,\cdots,n_{1}-1$, we have $\langle i|I_{A}|0+\cdots+(n_{1}-1)\rangle \langle0-i|M_{B}^{\dagger}M_{B}|0+\cdots+(n_{2}-1)\rangle \langle 0|I_{C}|0+\cdots+(n_{3}-1)\rangle=0$, that is, $\langle 0-i|M_{B}^{\dagger}M_{B}|0+\cdots+(n_{2}-1)\rangle=0$. Hence, $b_{00}=b_{ii}$, $i=1,\cdots,n_{1}-1$. And for the states $|\phi_{j+2(n_{1}-1)+3(n_{2}-n_{1})}\rangle$ and $|\phi_{2(n_{2}+n_{3}-1)-n_{1}}\rangle$, $j=n_{1},\cdots,n_{2}-1$, we have $\langle 0|I_{A}|0+\cdots+(n_{1}-1)\rangle \langle(j-1)-j|M_{B}^{\dagger}M_{B}|0+\cdots+(n_{2}-1)\rangle \langle m|I_{C}|0+\cdots+(n_{3}-1)\rangle=0$, i.e., $\langle (j-1)-j|M_{B}^{\dagger}M_{B}|0+\cdots+(n_{2}-1)\rangle=0$. Hence, $b_{j-1,j-1}=b_{jj}$, $j=n_{1},\cdots,n_{2}-1$. Thus all diagonal elements of $M_{B}^{\dagger}M_{B}$ are equal.

Therefore, all of Bob's POVM elements $M_{B}^{\dagger}M_{B}$ are proportional to the identity operator and Bob cannot start with a nontrivial  measurement.

As for Charlie, we write the POVM elements $M_{C}^{\dagger}M_{C}$ in the basis $\{|0\rangle,|1\rangle,$ $\cdots,|n_{3}-1\rangle\}_{C}$:

$$M_{C}^{\dagger}M_{C}=\left(
\begin{array}{cccccccc}
c_{00}&\cdots&c_{0,n_{1}-1}&c_{0n_{1}}&\cdots&c_{0,n_{2}-1}&\cdots&c_{0,n_{3}-1}\\
c_{10}&\cdots&c_{1,n_{1}-1}&c_{1n_{1}}&\cdots&c_{1,n_{2}-1}&\cdots&c_{1,n_{3}-1}\\
\vdots&      &\vdots  &\vdots&      &\vdots  &   &\vdots\\
c_{n_{1}-1,0}&\cdots&c_{n_{1}-1,n_{1}-1}&c_{n_{1}-1,n_{1}}&\cdots&c_{n_{1}-1,n_{2}-1}&\cdots&c_{n_{1}-1,n_{3}-1}\\
c_{n_{1}0}&\cdots&c_{n_{1},n_{1}-1}&c_{n_{1}n_{1}}&\cdots&c_{n_{1},n_{2}-1}&\cdots&c_{n_{1},n_{3}-1}\\
\vdots&      &\vdots  &\vdots&      &\vdots &       &\vdots\\
c_{n_{2}-1,0}&\cdots&c_{n_{2}-1,n_{1}-1}&c_{n_{2}-1,n_{1}}&\cdots&c_{n_{2}-1,n_{2}-1}&\cdots&c_{n_{2}-1,n_{3}-1}\\
c_{n_{2}0}&\cdots&c_{n_{2},n_{1}-1}&c_{n_{2}n_{1}}&\cdots&c_{n_{2},n_{2}-1}&\cdots&c_{n_{2},n_{3}-1}\\
\vdots&      &\vdots  &\vdots&     &\vdots  &       &\vdots\\
c_{n_{3}-1,0}&\cdots&c_{n_{3}-1,n_{1}-1}&c_{n_{3}-1,n_{1}}&\cdots&c_{n_{3}-1,n_{2}-1}&\cdots&c_{n_{3}-1,n_{3}-1}\\
\end{array}
\right),
 $$

The postmeasurement states~$\{I_{A}\otimes I_{B}\otimes M_{C}|\phi_{i}\rangle, i=1,\cdots,2(n_{2}+n_{3}-1)-n_{1}\}$ should be mutually orthogonal. Consider the states $|\phi_{i}\rangle$ and $|\phi_{i+(n_{1}-1)}\rangle$, $i=1,\cdots,n_{1}-1$, we have $\langle0-i|I_{A}|i\rangle \langle0|I_{B}|0-i\rangle \langle i|M_{C}^{\dagger}M_{C}|0\rangle=0,$ i.e., $\langle i|M_{C}^{\dagger}M_{C}|0\rangle=0$. Thus, $c_{0i}=c_{i0}=0$ for $i=1,\cdots,n_{1}-1$.

For the states $|\phi_{i}\rangle$ and~$|\phi_{j}\rangle$, $i\neq j$, $i,j=1,2,\cdots,n_{1}-1$, we have $\langle0-i|I_{A}|0-j\rangle \langle0|I_{B}|0\rangle \langle i|M_{C}^{\dagger}M_{C}|j\rangle=0$, that is, $\langle i|M_{C}^{\dagger}M_{C}|j\rangle=0$. So $c_{ij}=0$, $i\neq j$, $i,j=1,2,\cdots,n_{1}-1$.

For the states~$|\phi_{1+(n_{1}-1)}\rangle$ and $|\phi_{i+2(n_{1}-1)}\rangle$, $i=n_{1},\cdots,n_{2}-1$, we have $\langle1|I_{A}|0-1\rangle \langle0-1|I_{B}|0\rangle \langle 0|M_{C}^{\dagger}M_{C}|i\rangle=0$, that is, $\langle 0|M_{C}^{\dagger}M_{C}|i\rangle=0$. Thus, $c_{0i}=c_{i0}=0$ for $i=n_{1},\cdots,n_{2}-1$.

For the states $|\phi_{i}\rangle$ and $|\phi_{j+2(n_{1}-1)}\rangle, i=1,2,\cdots,n_{1}-1; j=n_{1},\cdots,n_{2}-1$, we have $\langle0-i|I_{A}|0-1\rangle \langle0|I_{B}|0\rangle \langle i|M_{C}^{\dagger}M_{C}|j\rangle=0$, that is, $\langle i|M_{C}^{\dagger}M_{C}|j\rangle=0$. Hence, $c_{ij}=c_{ji}=0$ for $i=1,2,\cdots,n_{1}-1$ and $j=n_{1},\cdots,n_{2}-1$.

For the states $|\phi_{i+2(n_{1}-1)}\rangle$ and $|\phi_{j+2(n_{1}-1)}\rangle$, $i\neq j$, $i,j=n_{1},\cdots,n_{2}-1$, we have $\langle0-1|I_{A}|0-1\rangle \langle0|I_{B}|0\rangle \langle i|M_{C}^{\dagger}M_{C}|j\rangle=0$, i.e., $\langle i|M_{C}^{\dagger}M_{C}|j\rangle=0$. So $c_{ij}=0$, $i\neq j$, $i,j=n_{1},\cdots,n_{2}-1$.

For the states~$|\phi_{1+(n_{1}-1)}\rangle$ and $|\phi_{i+2(n_{1}-1)+3(n_{2}-n_{1})}\rangle$, $i=n_{2},\cdots,n_{3}-1$, we have $\langle1|I_{A}|0-1\rangle \langle0-1|I_{B}|0\rangle \langle 0|M_{C}^{\dagger}M_{C}|i\rangle=0,$ i.e., $\langle 0|M_{C}^{\dagger}M_{C}|i\rangle=0$. Hence, $c_{0i}=c_{i0}=0$, $i=n_{2},\cdots,n_{3}-1$.

For the states~$|\phi_{i}\rangle$ and $|\phi_{j+2(n_{1}-1)+3(n_{2}-n_{1})}\rangle$, $i=1,2,\cdots,n_{1}-1$; $j=n_{2},\cdots,n_{3}-1$, we have $\langle0-i|I_{A}|0-1\rangle \langle0|I_{B}|0\rangle \langle i|M_{C}^{\dagger}$ $M_{C}|j\rangle=0,$ i.e., $\langle i|M_{C}^{\dagger}$ $M_{C}|j\rangle=0$. Therefore, $c_{ij}=c_{ji}=0$ for $i=1,2,\cdots,n_{1}-1$ and $j=n_{2},\cdots,n_{3}-1$.

For the states $|\phi_{i+2(n_{1}-1)}\rangle$ and $|\phi_{j+2(n_{1}-1)+3(n_{2}-n_{1})}\rangle$, $i=n_{1},\cdots,n_{2}-1$; $j=n_{2},\cdots,n_{3}-1$, we have $\langle0-1|I_{A}|0-1\rangle \langle0|I_{B}|0\rangle \langle i|M_{C}^{\dagger}M_{C}|j\rangle=0,$ i.e., $\langle i|M_{C}^{\dagger}M_{C}|j\rangle=0$. Hence, $c_{ij}=c_{ji}=0$ for $i=n_{1},\cdots,n_{2}-1$ and $j=n_{2},\cdots,n_{3}-1$.

For the states $|\phi_{i+2(n_{1}-1)+3(n_{2}-n_{1})}\rangle$ and $|\phi_{j+2(n_{1}-1)+3(n_{2}-n_{1})}\rangle$, $i\neq j$, $i,j=n_{2},\cdots,n_{3}-1$, we have $\langle0-1|I_{A}|0-1\rangle \langle0|I_{B}|0\rangle \langle i|M_{C}^{\dagger}M_{C}|j\rangle=0,$ that is, $\langle i|M_{C}^{\dagger}M_{C}|j\rangle=0$. Hence, $c_{ij}=0$, $i\neq j$, $i,j=n_{2},\cdots,n_{3}-1$.

For the states $|\phi_{i+2(n_{1}-1)}\rangle$, $i=1,2,\cdots,n_{1}-1$ and ~$|\phi_{2(n_{2}+n_{3}-1)-n_{1}}\rangle$, we have $\langle0|I_{A}|0+\cdots+(n_{1}-1)\rangle \langle i|I_{B}|0+\cdots+(n_{2}-1)\rangle \langle 0-i|M_{C}^{\dagger}M_{C}|0+\cdots+(n_{3}-1)\rangle=0,$ that is, $\langle 0-i|M_{C}^{\dagger}M_{C}|0+\cdots+(n_{3}-1)\rangle=0$. So $c_{00}=c_{ii}$, $i=1,\cdots,n_{1}-1$.

For the states $|\phi_{j+2(n_{1}-1)+2(n_{2}-n_{1})}\rangle$, $j=n_{1},\cdots,n_{2}-1$ and ~$|\phi_{2(n_{2}+n_{3}-1)-n_{1}}\rangle$, we have $\langle0|I_{A}|0+\cdots+(n_{1}-1)\rangle \langle m|I_{B}|0+\cdots+(n_{2}-1)\rangle \langle (j-1)-j|M_{C}^{\dagger}M_{C}|0+\cdots+(n_{3}-1)\rangle=0,$ that is, $\langle (j-1)-j|M_{C}^{\dagger}M_{C}|0+\cdots+(n_{3}-1)\rangle=0$. Hence, $c_{j-1,j-1}=c_{jj}$, $j=n_{1},\cdots,n_{2}-1$. And for the states $|\phi_{k-n_{1}+2n_{2}+n_{3}-2}\rangle$, $e=n_{2},\cdots,n_{3}-1$ and $|\phi_{2(n_{2}+n_{3}-1)-n_{1}}\rangle$, we have $\langle0|I_{A}|0+\cdots+(n_{1}-1)\rangle \langle m|I_{B}|0+\cdots+(n_{2}-1)\rangle \langle (k-1)-k|M_{C}^{\dagger}M_{C}|0+\cdots+(n_{3}-1)\rangle=0,$ i.e., $\langle (k-1)-k|M_{C}^{\dagger}M_{C}|0+\cdots+(n_{3}-1)\rangle=0$. Hence, $c_{k-1,k-1}=c_{kk}$, $k=n_{2},\cdots,n_{3}-1$. Therefore, all diagonal elements of $M_{C}^{\dagger}M_{C}$ are equal.

All of Charlie's POVM elements are proportional to the identity operator and Charlie cannot start with a nontrivial measurement.

To sum up, all of the three participants Alice, Bob and Charlie cannot start with a nontrivial measurement. Therefore, the $2~(n_{2}+n_{3}-1)-n_{1}$ states cannot be perfectly distinguished by LOCC. This completes the proof.\\

\noindent{\bf C. Proof of Theorem 3}

\begin{proof}
Similarly, since the quantum state $\{|\varphi\rangle_{it}=|x\rangle_{it}|y\rangle_{it}|z\rangle_{it}, i=1,2,\cdots, l_{t}\}$ is a set of locally indistinguishable orthogonal product states in $\mathcal{C}^{d_{t}}\otimes \mathcal{C}^{d_{t+1}}\otimes \mathcal{C}^{d_{t+2}}$ $(t=1,4,\cdots,n-3)$, for $|\phi_{i}\rangle$, $i=1,2,\cdots l_{1}$, by Theorem 1,  the first three participants cannot perform a nontrivial measurement. And so on, any of the next three participants  cannot perform a nontrivial measurement. As for the last participant, with the first two participants, by Theorem 1, from the states $|\phi_{i+l_{1}+\cdots+l_{n-3}}\rangle$, $i=1,2,\cdots l_{n}$, we can arrive the same conclusion. To sum up, we prove it in a similar way that all of the participants can only make a trivial measurement. Thus the~$l_{1}+l_{4}+\cdots+l_{n}$ quantum states in~$\mathcal{C}^{d_{1}}\otimes \mathcal{C}^{d_{2}}\otimes\cdots\otimes \mathcal{C}^{d_{n}}$ $(d_1,d_2,\cdots,d_n \geq3$, $n>6$, $n\equiv1~(mod~3)$) quantum system cannot be perfectly distinguished by LOCC. This completes the proof.
\end{proof}

\noindent{\bf D. Proof of Theorem 4}

\begin{proof}
Similarly, since the quantum state $\{|\varphi\rangle_{it}=|x\rangle_{it}|y\rangle_{it}|z\rangle_{it}$, $i=1,2,\cdots, l_{t}\}$ is nonlocal indistinguishable orthogonal product states in~$\mathcal{C}^{d_{t}}\otimes \mathcal{C}^{d_{t+1}}\otimes \mathcal{C}^{d_{t+2}}$ $(t=1,4,\cdots,n-4)$, for $|\phi_{i}\rangle$, $i=1,2,\cdots l_{1}$, by Theorem 1, the first three participants cannot perform a nontrivial measurement. And so on, any of the next three participants  cannot perform a nontrivial measurement. As for the last two participants, with the first participant, by Theorem 1, from the states $|\phi_{i+l_{1}+\cdots+l_{n-4}}\rangle, i=1,2,\cdots l_{n-1}$, we can arrive the same conclusion. To sum up, we prove it in a similar way that all of the participants can only make a trivial measurement. So the $l_{1}+l_{4}+\cdots+l_{n-1}$ quantum states in~$\mathcal{C}^{d_{1}}\otimes \mathcal{C}^{d_{2}}\otimes\cdots\otimes \mathcal{C}^{d_{n}}$ $(d_1,d_2,\cdots,d_n \geq3$, $n>6$, $n\equiv2~(mod~3))$ quantum system cannot be perfectly distinguished by LOCC. This completes the proof.
\end{proof}

\end{document}